\pdfoutput=1
\documentclass[
aps,
prl,
footinbib,
reprint,
superscriptaddress,
longbibliography]{revtex4-2}

\usepackage{amsmath,amssymb,amsfonts}
\usepackage{mathtools}    
\usepackage{bm}
\usepackage{physics2}
\usephysicsmodule{braket}
\usepackage{stmaryrd}
\usepackage{mathrsfs}
\usepackage{derivative}

\usepackage{enumitem}

\usepackage{amsthm}
\theoremstyle{definition}
\newtheorem{thm}{Theorem}

\newcommand{\R}{\mathbb{R}}
\newcommand{\C}{\mathbb{C}}

\newcommand{\id}{\mathrm{id}}

\newcommand{\sto}{\shortrightarrow}

\DeclareMathOperator{\Tr}{Tr}

\DeclarePairedDelimiter{\abs}{\lvert}{\rvert}
\DeclarePairedDelimiter{\norm}{\lVert}{\rVert}

\renewcommand{\Re}{\operatorname{Re}}

\renewcommand\paragraph[1]{%
  \par\emph{#1---}\kern2pt\relax\ignorespaces}

\usepackage[resetlabels]{multibib}
\newcites{SM}{References}
\usepackage{textcase}

\makeatletter
\AtBeginDocument{%
  \let\thebibliography\rtx@thebibliography
  \let\endthebibliography\endrtx@thebibliography
}
\makeatother

\makeatletter

\def\REV@fm@prefix{frontmatter.}
\def\REV@fm@dest#1{\REV@fm@prefix#1}

\def\frontmatter@footnotemark#1{%
  \leavevmode
  \ifhmode\edef\@x@sf{\the\spacefactor}\nobreak\fi
  \begingroup
    \hyper@linkstart{link}{\REV@fm@dest{#1}}%
      \csname c@\@mpfn\endcsname#1\relax
      \def\@thefnmark{\frontmatter@thefootnote}%
      \@makefnmark
    \hyper@linkend
  \endgroup
  \ifhmode\spacefactor\@x@sf\fi
  \relax
}

\def\present@bibnote#1#2{%
  \item[%
    \textsuperscript{%
      \normalfont
      \Hy@raisedlink{%
        \hyper@anchorstart{\REV@fm@dest{#1}}\hyper@anchorend
      }%
      \begingroup
        \csname c@\@mpfn\endcsname#1\relax
        \frontmatter@thefootnote
      \endgroup
    }%
  ]#2\par
}

\def\present@FM@footnote#1#2{%
  \begingroup
    \csname c@\@mpfn\endcsname#1\relax
    \def\@thefnmark{\frontmatter@thefootnote}%
    \Hy@raisedlink{%
      \hyper@anchorstart{\REV@fm@dest{#1}}\hyper@anchorend
    }%
    \frontmatter@footnotetext{#2}%
  \endgroup
}

\newcounter{REV@maketitle@count}

\def\frontmatter@maketitle{%
  \stepcounter{REV@maketitle@count}%
  \begingroup
    \ifnum\value{REV@maketitle@count}>1\relax
      \def\REV@fm@prefix{SMfrontmatter.}%
      \let\frontmatter@footnote@produce
          \frontmatter@footnote@produce@footnote
    \fi
    \let\REV@orig@label\label
    \def\label##1{%
      \edef\REV@tmp{##1}%
      \edef\REV@tgt{FirstPage}%
      \ifx\REV@tmp\REV@tgt
      \else
        \REV@orig@label{##1}%
      \fi
    }%
    \@author@finish
    \title@column\titleblock@produce
    \suppressfloats[t]%
    \let\abstract\@undefined
    \let\endabstract\@undefined
    \titlepage@sw{\vfil\clearpage}{}%
    \onecolumn@grid@setup
    \def\set@footnotewidth{\set@footnotewidth@one}%
  \endgroup
}

\makeatother

\usepackage{hyperref}
\hypersetup{
  colorlinks=true,
  citecolor=magenta,
  linkcolor=blue,
  urlcolor=cyan
}

\begin{document}
\newcommand{\nameoftitle}{%
Quantitative Wigner-Araki-Yanase Theorems for Unitary and Antiunitary Symmetries}

\title{\nameoftitle}

\author{Akihiro Hokkyo}
\email{hokkyo@cat.phys.s.u-tokyo.ac.jp}
\affiliation{Department of Physics, Graduate School of Science, The University of Tokyo, 7-3-1 Hongo, Bunkyo, Tokyo 113-8654, Japan}
\author{Hiroyasu Tajima}
\email{hiroyasu.tajima@inf.kyushu-u.ac.jp}
\affiliation{Information Science and Electrical Engineering, Department of Informatics, Kyushu University, Nishi-ku, Fukuoka 819-0395, Japan}
\affiliation{JST FOREST, 4-1-8 Honcho, Kawaguchi, Saitama 332-0012, Japan}

\begin{abstract}
Symmetry imposes fundamental constraints on quantum measurement and control. 
The Wigner-Araki-Yanase theorem and its quantitative extensions capture this restriction for continuous symmetries, in terms of fluctuations of conserved generators.
Such generator-based bounds, however, do not provide quantitative limitations for discrete unitary symmetries or for antiunitary symmetries.
Here we establish quantitative WAY-type theorems for symmetry-breaking projective measurements and unitary gates under arbitrary unitary and antiunitary symmetries. 
Our approach is based on a two-target no-programming inequality: 
if a single processor approximately implements two operations that amplify distinguishability, 
then the corresponding program states must themselves be distinguishable.
Applied to symmetric implementations, this converts the error of an asymmetric measurement or gate directly into a lower bound on the asymmetry of the apparatus state, quantified by its fidelity with its symmetry-transformed copy.
Our results apply to discrete and antiunitary symmetries, 
thereby providing a fundamental limit for symmetry-limited quantum measurement and control beyond the continuous-symmetry regime.
\end{abstract}

\maketitle
\paragraph{Introduction}
Symmetry is one of the most fundamental restrictions on quantum dynamics. 
If all available operations respect a symmetry, then a measurement or gate that breaks that symmetry cannot be implemented for free. 
The Wigner-Araki-Yanase (WAY) theorem~\cite{wignerMessungQuantenmechanischerOperatoren1952,arakiMeasurementQuantumMechanical1960} is the paradigmatic example of this principle. 
It states that, under an additive conservation law, an observable that does not commute with the conserved quantity cannot be measured exactly. 

Quantitative WAY theorems refine this no-go statement into tradeoff relations~\cite{yanaseOptimalMeasuringApparatus1961,
ghirardiLimitationsQuantumMeasurements1981,
ozawaConservationLawsUncertainty2002,
miyaderaWignerArakiYanaseTheoremDistinguishability2006,
buschPositionMeasurementsObeying2011,
loveridgeMeasurementQuantumMechanical2011,
korzekwaResourceTheoryAsymmetry,
tajimaCoherencevarianceUncertaintyRelation2019,
mohammadyMeasurementDisturbanceConservation2023,
emoriErrorDisturbanceIrreversibility2023}.
In particular, Yanase's analysis and subsequent refinements show that high-accuracy approximate measurements require a large fluctuation of the conserved quantity in the apparatus
~\cite{yanaseOptimalMeasuringApparatus1961,
ghirardiLimitationsQuantumMeasurements1981,
ozawaConservationLawsUncertainty2002,
buschPositionMeasurementsObeying2011,
loveridgeMeasurementQuantumMechanical2011,
korzekwaResourceTheoryAsymmetry,
tajimaCoherencevarianceUncertaintyRelation2019,
emoriErrorDisturbanceIrreversibility2023}. 
Analogous restrictions apply to quantum gates, including the CNOT gate~\cite{ozawaConservativeQuantumComputing2002},
single-qubit gates~\cite{karasawaConservationlawinducedQuantumLimits2007,karasawaGateFidelityArbitrary2009}, 
and general unitary gates~\cite{tajimaUncertaintyRelationsImplementation2018,tajimaCoherenceCostViolating2020}. 
More recently, WAY-type phenomena have been recognized as reflecting a general resource-cost principle in symmetry-constrained quantum dynamics~\cite{tajimaUniversalLimitationQuantum2022,
tajimaUniversalTradeoffStructure2022}.
It connects 
the above measurement limitations and unitary-gate limitations
with covariant quantum error correction and the Eastin--Knill theorem~\cite{eastinRestrictionsTransversalEncoded2009,faistContinuousSymmetriesApproximate2020,kubicaUsingQuantumMetrological2021}, 
and has led to bounds on energetic coherence cost of Gibbs-preserving operations in quantum thermodynamics~\cite{tajimaGibbsPreservingOperationsRequiring2025}.
Thus, the WAY theorem is not merely a constraint on quantum measurements, but a paradigmatic manifestation of a broader principle: symmetry imposes quantitative limitations on quantum dynamics.

Despite this progress, 
a quantitative WAY theory for general symmetries is still missing. 
The main obstruction is that existing quantitative bounds are formulated in terms of symmetry generators and their variances.
This generator-based approach is powerful for continuous symmetries, but it does not directly apply to discrete symmetries, such as lattice translations, or to antiunitary symmetries.
Recent resource-theoretic formulations have placed WAY-type constraints in a broader framework beyond symmetry-based settings~\cite{tajimaUniversalTradeoffRelations2025}, 
but they do not directly yield explicit quantitative bounds for discrete symmetries, 
since their continuity-based arguments for additive resource monotones break down for discrete symmetries.

In this Letter, we establish quantitative WAY-type limitations under \emph{arbitrary unitary and antiunitary symmetries}, including discrete groups and symmetries with unbounded generators (Fig.~\ref{fig:1}). 
The resulting bounds do not rely on symmetry generators or their variances. Instead, they quantify the required asymmetry of the apparatus by the fidelity between the apparatus state and its symmetry-transformed copy.
Thus, the generator variance appearing in conventional quantitative WAY theorems is replaced by a fidelity-based asymmetry measure that remains meaningful for discrete and antiunitary symmetries.
They also apply to infinite-dimensional symmetries with unbounded generators, recovering and strengthening the recent WAY theorem for unbounded conserved quantities in that setting~\cite{kuramochiWignerArakiYanaseTheoremContinuous2023}. 
\begin{figure}[tbp]
    \centering
    \includegraphics[width=\linewidth]{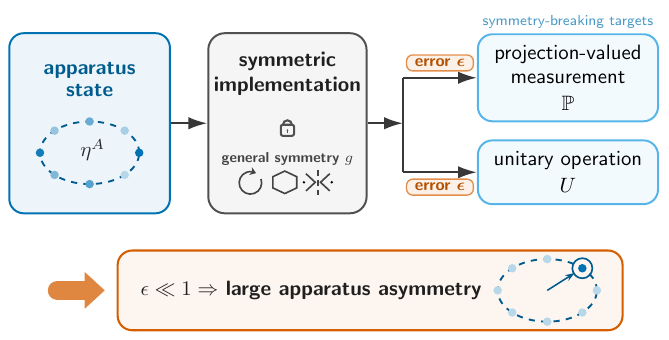}
    \caption{Symmetry imposes a quantitative cost on implementing symmetry-breaking quantum operations.
A symmetric implementation, assisted by an apparatus state $\eta^A$, is required to approximate a target operation that breaks a general symmetry $g$, such as a projection-valued measurement $\mathbb P$ or a unitary operation $U$.
For arbitrary unitary or antiunitary symmetries, 
small implementation error $\epsilon$ is possible only if the apparatus state carries sufficient asymmetry.
This tradeoff gives a quantitative WAY-type limitation without relying on symmetry generators or their variances.
}
    \label{fig:1}
\end{figure}

The key mechanism is the no-programming perspective developed by Marvian and Spekkens~\cite{marvianInformationtheoreticAccountWignerArakiYanase2012};
see also Ref.~\cite{ahmadiWignerArakiYanase2013} for a related formulation. 
In the WAY setting, applying a symmetry transformation to a symmetric implementation of an asymmetric operation with apparatus state produces another implementation, by the same processor, of the transformed operation with the transformed apparatus state. Thus the two apparatus states play the role of programs for two symmetry-related target operations, and the no-programming theorem~\cite{nielsenProgrammableQuantumGate1997,
dusekQuantumcontrolledMeasurementDevice2002} yields the qualitative WAY theorem~\cite{marvianInformationtheoreticAccountWignerArakiYanase2012}. 
This connection suggests that a quantitative WAY theorem should follow from a quantitative \emph{two-target} no-programming principle. However, a two-target quantitative no-programming bound for measurements has been missing. Although quantitative bounds have been studied for approximate programming of unitary operations~\cite{vidalStorageQuantumDynamics2000,hilleryApproximateProgrammableQuantum2006,majenzEntropyQuantumInformation2017}, the common two-target structure behind projective measurements and unitary gates had not been formulated as a unified principle. This problem is also distinct from approximate \emph{universal} programming~\cite{perez-garciaOptimalityProgrammableQuantum2006,majenzEntropyQuantumInformation2017,kubickiResourceQuantificationNoPrograming2019,yangOptimalUniversalProgramming2020}, 
which concerns large families of operations and primarily constrains the required program dimension.

We prove this missing no-programming relation for measurements and channels in a unified way.
Our no-programming inequality applies to pairs of measurements or channels that amplify distinguishability, namely, 
pairs that map some nonorthogonal inputs to perfectly distinguishable outputs.
This operational class includes, in particular, distinct projective measurements and distinct unitary gates. 
For any such pair, if the same processor implements both target operations with error at most $\epsilon$, then the fidelity between the corresponding program states must be at most $O(\epsilon)$, with a denominator determined by the distinguishability amplification of the target pair. 
Applied to symmetric implementations, 
this converts the implementation error of a symmetry-breaking measurement or gate directly into a lower bound on the asymmetry of the apparatus state. 
Our results thereby provide a general route to quantitative symmetry-limited bounds on quantum measurement and control beyond the generator-based regime.

\paragraph{Implementations of measurements and channels}
We begin with the implementation of a measurement.
A measurement $\mathbb{P}$ on $S$ with outcomes in a (measurable) space $\Omega$ is an affine map from the set of density operators on $S$ to the set of probability measures on $\Omega$, 
which assigns to each event $E\subset\Omega$ a positive operator $\mathbb{P}_E$ such that
$\mathbb{P}(\rho)(E)=\Tr(\mathbb{P}_E\rho)$.
If $\mathbb{P}_E$ is a projection for every event $E$, then $\mathbb{P}$ is called a projection-valued measure (PVM).
We measure the error between two measurements by
\begin{equation}
    P(\mathbb P,\mathbb Q)
    \coloneqq
    \sup_{\rho}
    \left[
        1-
        F_{\mathrm{BC}}(\mathbb P(\rho),\mathbb Q(\rho))^2
    \right]^{1/2},
\end{equation}
where $F_{\mathrm{BC}}$ is the Bhattacharyya coefficient between classical probability measures~\cite{lecamContiguityHellingerTransforms1990,tsybakovLowerBoundsMinimax2009}.
An apparatus state $\eta^A$ and a measurement $\mathbb M^{SA\sto\Omega}$ implement $\mathbb P^{S\sto\Omega}$ with error $\epsilon$ if
\begin{equation}
    P\bigl(\mathbb P^{S\sto\Omega},
    \mathbb M^{SA\sto\Omega}(\bullet\otimes\eta^A)\bigr)
    \le \epsilon .
    \label{eq:implementation_measurement}
\end{equation}

Similarly, an apparatus state $\eta^A$ and a channel $\Lambda^{SA\sto S'}$ implement a channel $\mathcal E^{S\sto S'}$ with error $\epsilon$ if
\begin{equation}
    P_e\bigl(\mathcal E^{S\sto S'},
    \Lambda^{SA\sto S'}(\bullet\otimes\eta^A)\bigr)
    \le \epsilon ,
    \label{eq:implementation_channel}
\end{equation}
where $P_e$ is the channel purified distance~\cite{gilchristDistanceMeasuresCompare2005},
\begin{equation}
    P_e(\mathcal E,\mathcal F)
    \coloneqq
    \sup_{R,\rho}
    P\!\left(
        (\mathcal E\otimes\id^R)(\rho),
        (\mathcal F\otimes\id^R)(\rho)
    \right),
\end{equation}
and $P(\rho,\sigma)=\sqrt{1-F(\rho,\sigma)^2}$ is the purified distance.

Let $G$ be a symmetry group.
For each $g\in G$, let $V_g^S,V_g^{S'},V_g^A$ be unitary or antiunitary projective representations on $S,S'$, and $A$, respectively; for each fixed $g$, the three operators are assumed to be either all unitary or all antiunitary.
We write
\[
    \mathcal V_g^X(\bullet)=V_g^X\bullet (V_g^X)^{-1}
    \qquad (X=S,S',A).
\]
The symmetry action on the outcome space is described by (measurable) bijections $T_g^\Omega:\Omega\to\Omega$, inducing the pushforward action
\[
    \mathcal T_g^\Omega:\mu\mapsto \mu\circ (T_g^\Omega)^{-1}.
\]
A measurement $\mathbb M^{SA\sto\Omega}$ and a channel $\Lambda^{SA\sto S'}$ are symmetric, or covariant~\cite{holevoCovariantMeasurementsOptimality2011} if
\begin{align}
    \mathbb M^{SA\sto\Omega}\circ
    (\mathcal V_g^S\otimes\mathcal V_g^A)
    &=
    \mathcal T_g^\Omega\circ
    \mathbb M^{SA\sto\Omega},
    \label{eq:covariant_measurement}
    \\
    \Lambda^{SA\sto S'}\circ
    (\mathcal V_g^S\otimes\mathcal V_g^A)
    &=
    \mathcal V_g^{S'}\circ
    \Lambda^{SA\sto S'}
    \label{eq:covariant_channel}
\end{align}
for all $g\in G$.
In the terminology of the resource theory of asymmetry~\cite{gourResourceTheoryQuantum2008,
marvianmashhadSymmetryAsymmetryQuantum2012,
marvianCoherenceDistillationMachines2020,
marvianOperationalInterpretationQuantum2022,
yamaguchiIidResourceTheory2023,
shitaraIidStateConvertibility2025,
yamaguchiQuantumGeometricTensor2024}, which is a branch of resource theory treating symmetry breaking as a resource, the covariant processors
defined in \eqref{eq:covariant_measurement} and \eqref{eq:covariant_channel} are the free operations. The only object that may break the
symmetry is the apparatus state $\eta^A$, which therefore plays the role of an
asymmetry resource, or equivalently a quantum reference frame. We ask how accurately a noncovariant target measurement or unitary gate can be implemented by such a free processor when assisted by this resource state $\eta^A$ in the apparatus.

\paragraph{Quantitative WAY theorem for general symmetries}
Our main results show that accurate implementation of a symmetry-breaking PVM or unitary gate requires the apparatus state to be distinguishable from its symmetry transform.

\begin{thm}
\label{thm:main_PVM}
Let $\mathbb P=\mathbb P^{S\sto\Omega}$ be a PVM.
Suppose that an apparatus state $\eta^A$ and a covariant measurement $\mathbb M^{SA\sto\Omega}$ implement $\mathbb P$ with error $\epsilon$.
Then, for every $g\in G$ such that the denominator below is nonzero,
\begin{equation}
    F(\eta^A,\mathcal V_g^A(\eta^A))
    \le
    \frac{2\epsilon}{
    \sup_E
    \norm{
        \mathbb P_E
        -
        \mathcal V_g^S
        \bigl(
            \mathbb P_{(T_g^\Omega)^{-1}(E)}
        \bigr)
    }}
    .
    \label{eq:main_PVM}
\end{equation}
\end{thm}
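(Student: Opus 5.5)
We argue through the no-programming mechanism: covariance turns the given implementation into a second implementation of a different PVM by the same processor, and a two-target bound then forces the two apparatus states to be distinguishable.

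\emph{Step 1 (second implementation).} Fix $g$ and set $\mathbb Q_E \coloneqq \mathcal V_g^S(\mathbb P_{(T_g^\Omega)^{-1}(E)})$. This is again a PVM, and it is the target obtained by transforming $\mathbb P$ by $g$. Write $\mathbb M_\eta \coloneqq \mathbb M(\bullet\otimes\eta)$ and $\eta_g \coloneqq \mathcal V_g^A(\eta)$. By covariance \eqref{eq:covariant_measurement},
\[
\mathbb M\bigl(\rho\otimes\eta_g\bigr)=\mathcal T_g^\Omega\,\mathbb M\bigl(\mathcal V_{g^{-1}}^S(\rho)\otimes\eta\bigr).
\]
The Bhattacharyya coefficient is invariant under the bijective pushforward $\mathcal T_g^\Omega$, and the supremum over $\rho$ is invariant under $\rho\mapsto \mathcal V_{g^{-1}}^S(\rho)$; antiunitary $V_g$ still maps density operators bijectively. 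Hence $P(\mathbb Q,\mathbb M_{\eta_g})=P(\mathbb P,\mathbb M_\eta)\le\epsilon$. So one processor $\mathbb M$ implements $\mathbb P$ with program $\eta$ and $\mathbb Q$ with program $\eta_g$, both with error $\epsilon$.

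\emph{Step 2 (amplifying inputs).} Let $\delta \coloneqq \sup_E\norm{\mathbb P_E-\mathbb Q_E}$. Fix an event $E$ and write $p=\mathbb P_E$, $q=\mathbb Q_E$, both projections. For any unit vector $\psi$, the post-measurement vectors $p\psi$ and $(1-p)\psi$ are orthogonal, and $\Tr(p\,\cdot)$ separates them perfectly. The plan is to choose a unit vector $\psi$ with $\norm{(p-q)\psi}$ close to $\norm{p-q}$. Using the two-projection structure (Jordan angles), we then build two states $\psi_0\in\operatorname{ran}p$ and $\psi_1\in\operatorname{ran}q^\perp$, or the analogous pair with the roles of $p$ and $q$ swapped, whose overlap $\abs{\braket{\psi_0|\psi_1}}$ is governed by $\delta$. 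We coarse-grain the outcomes to the binary event $\{E,E^c\}$. Then $\mathbb P$ sends $\psi_0$ deterministically to $E$, $\mathbb Q$ sends $\psi_1$ deterministically to $E^c$, while the inputs themselves have a nonzero overlap set by $\delta$.

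\emph{Step 3 (fidelity bound).} Let $f=F(\eta,\eta_g)$. By monotonicity of the Bhattacharyya coefficient under the processor applied to $\psi_i\otimes\eta$ and $\psi_i\otimes\eta_g$,
\[
F_{\mathrm{BC}}\bigl(\mathbb M_\eta(\psi),\mathbb M_{\eta_g}(\psi)\bigr)\ge f .
\]
Both classical distributions are within purified distance $\epsilon$ of the ideal outputs $\mathbb P(\psi)$ and $\mathbb Q(\psi)$. On a suitable input, such as a superposition of $\psi_0$ and $\psi_1$, these ideal outputs have Bhattacharyya coefficient roughly $1-\Theta(\delta)$ in the right direction. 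A triangle inequality for the purified (Hellinger-angle) distance then gives $\sqrt{1-f^2}\ge$ (ideal separation) $-2\epsilon$. That inequality has the wrong shape, so the intended argument is the two-target one instead. Consider the processor acting on $\psi\otimes(\text{superposition of purifications of }\eta,\eta_g)$, or equivalently use the Gram-matrix/overlap argument of no-programming: $\abs{\braket{\psi_0|\psi_1}}\cdot f\lesssim$ overlap of the outputs $\le 2\epsilon$. The output of the first implementation on $\psi_0$ is $\epsilon$-close to $\delta_E$, and that of the second on $\psi_1$ is $\epsilon$-close to $\delta_{E^c}$. Their Bhattacharyya coefficient is at most about $2\epsilon$, via $\sqrt{a}\sqrt{b}+\sqrt{1-a}\sqrt{1-b}$ with $a\ge1-\epsilon^2$ and $b\le\epsilon^2$. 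The inputs $\psi_0\otimes\eta$ and $\psi_1\otimes\eta_g$ have fidelity $\abs{\braket{\psi_0|\psi_1}}f$. Monotonicity then gives $\abs{\braket{\psi_0|\psi_1}}\,f\le 2\epsilon$.

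\emph{Main obstacle.} The hard step is to choose $\psi_0\in\operatorname{ran}p$ and $\psi_1\in\ker q$ with $\abs{\braket{\psi_0|\psi_1}}\ge\norm{p-q}$, possibly after swapping $p$ and $q$ or replacing $E$ by $E^c$; taking the supremum over $E$ then yields \eqref{eq:main_PVM}. I would use Halmos' two-subspace decomposition. The norm $\norm{p-q}$ equals $\max\bigl(\norm{p(1-q)},\norm{q(1-p)}\bigr)$, which is the largest $\sin$ of the Jordan angles, or $1$ if there is a generic-part-free piece. Moreover $\norm{p(1-q)}=\sup\abs{\braket{\psi_0|\psi_1}}$ over unit vectors $\psi_0\in\operatorname{ran}p$ and $\psi_1\in\operatorname{ran}(1-q)$. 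In infinite dimensions this is a supremum, so we take near-optimizers and pass to the limit.
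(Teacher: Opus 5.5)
Your argument is correct and is essentially the paper's proof specialized to PVMs. Covariance gives a second implementation of $\mathbb Q$ by the same $\mathbb M$ with program $\mathcal V_g^A(\eta^A)$. Multiplicativity and monotonicity of fidelity on the inputs $\psi_0\otimes\eta^A$ and $\psi_1\otimes\mathcal V_g^A(\eta^A)$, with $\psi_0\in\operatorname{ran}\mathbb P_E$ and $\psi_1\in\operatorname{ran}\mathbb Q_{\Omega\setminus E}$, give $\abs{\langle\psi_0|\psi_1\rangle}\,F(\eta^A,\mathcal V_g^A(\eta^A))\le 2\epsilon$. Combining this with $\norm{p-q}=\max\{\norm{p(1-q)},\norm{(1-p)q}\}$ yields the theorem.

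The only differences are cosmetic. You bound the output Bhattacharyya coefficient by coarse-graining to $\{E,\Omega\setminus E\}$, whereas the paper uses the continuity bound $\abs{F_{\mathrm{BC}}(\mu,\nu)-F_{\mathrm{BC}}(\mu,\nu')}\le P(\nu,\nu')$. The elementary identity $(p-q)^2=p(1-q)p+(1-p)q(1-p)$ suffices in place of Halmos' decomposition. The abandoned superposition and triangle-inequality detour at the start of your Step 3 should be deleted.
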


\begin{thm}
\label{thm:main_unitary}
Let $\mathcal U=\mathcal U^{S\sto S'}$ be a unitary gate.
Suppose that an apparatus state $\eta^A$ and a covariant channel $\Lambda^{SA\sto S'}$ implement $\mathcal U$ with error $\epsilon$.
Then, for every $g\in G$ such that the denominator below is nonzero,
\begin{equation}
    F(\eta^A,\mathcal V_g^A(\eta^A))
    \le
    \frac{4\epsilon}{
    \norm{
        \mathcal U
        -
        \mathcal V_g^{S'}
        \circ
        \mathcal U
        \circ
        \mathcal V_{g^{-1}}^S
    }_\diamond}
    .
    \label{eq:main_unitary}
\end{equation}
\end{thm}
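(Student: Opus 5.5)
The plan is to derive Theorem~\ref{thm:main_unitary} from a two-target no-programming inequality for unitary channels. We use covariance to produce the second target, and then bound the program fidelity by comparing a pair of nonorthogonal inputs that the two targets send to orthogonal outputs. Write $\Lambda_\eta\coloneqq\Lambda^{SA\sto S'}(\bullet\otimes\eta^A)$ and $\eta_g\coloneqq\mathcal V_g^A(\eta^A)$.

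\emph{Step 1 (covariance yields a second implementation).} From \eqref{eq:covariant_channel} we get
\[
\Lambda_{\eta_g}=\mathcal V_g^{S'}\circ\Lambda_\eta\circ\mathcal V_{g^{-1}}^S .
\]
Hence $(\Lambda,\eta_g)$ implements $\mathcal U_g\coloneqq\mathcal V_g^{S'}\circ\mathcal U\circ\mathcal V_{g^{-1}}^S$. This is again a unitary channel, also in the antiunitary case, because conjugating a unitary by antiunitaries gives a unitary.

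The error is still $\epsilon$, since $P_e$ is invariant under this conjugation. When $V_g$ is antiunitary, $\mathcal V_g\otimes\id^R$ is not a legitimate map on $SR$. I would instead pair $V_g$ with a complex conjugation $J_R$ on the reference. The map $V_g\otimes J_R$ is an antiunitary on $SR$ that preserves fidelity, and the supremum over all $(R,\rho)$ in $P_e$ absorbs the relabelling of inputs. This bookkeeping step needs care but should be routine.

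\emph{Step 2 (no-programming inequality).} Let $\Lambda$ implement unitary channels $\mathcal W_1,\mathcal W_2$ with programs $\eta,\eta'$, each with error at most $\epsilon$. The claim is
\[
F(\eta,\eta')\le 4\epsilon/\norm{\mathcal W_1-\mathcal W_2}_\diamond .
\]

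Take pure states $\psi,\varphi$ on $SR$ with overlap $c=\abs{\braket{\psi|\varphi}}$ such that $W_1\psi\perp W_2\varphi$. Multiplicativity of fidelity for product states together with monotonicity under $\Lambda\otimes\id^R$ gives
\[
cF(\eta,\eta')=F(\psi\otimes\eta,\varphi\otimes\eta')\le F\bigl((\Lambda_\eta\otimes\id)(\psi),(\Lambda_{\eta'}\otimes\id)(\varphi)\bigr).
\]
The two outputs on the right are within purified distance $\epsilon$ of the orthogonal pure states $W_1\psi$ and $W_2\varphi$. Using the angle $\arccos F$, which is a metric, together with $\sin(\arccos F)=P$, their angle is at least $\pi/2-2\arcsin\epsilon$. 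Their fidelity is therefore at most $\sin(2\arcsin\epsilon)\le 2\epsilon$, and so $cF(\eta,\eta')\le 2\epsilon$.

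\emph{Step 3 (optimizing the overlap).} For fixed $\psi$, choose $\varphi=W_2^\dagger\chi$, where $\chi$ is the normalized component of $W_2\psi$ orthogonal to $W_1\psi$. Then
\[
c=\sqrt{1-\abs{\braket{W_1\psi|W_2\psi}}^2}=\tfrac12\norm{\mathcal W_1(\psi)-\mathcal W_2(\psi)}_1 .
\]
Taking the supremum over $\psi$ and $R$ gives $c$ arbitrarily close to $\tfrac12\norm{\mathcal W_1-\mathcal W_2}_\diamond$; pure inputs suffice by convexity. Combining with Step 2 yields $F(\eta,\eta')\le 4\epsilon/\norm{\mathcal W_1-\mathcal W_2}_\diamond$. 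Applying this with $\mathcal W_1=\mathcal U$, $\mathcal W_2=\mathcal U_g$ and $\eta'=\eta_g$ gives \eqref{eq:main_unitary}.

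I expect the main obstacle to be the clean handling of antiunitary symmetries with a reference system in Step 1. The rest is the standard fidelity-geometry argument in Steps 2 and 3.
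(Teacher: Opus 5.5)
Your proposal is correct and follows the paper's proof: covariance turns $\mathcal V_g^A(\eta^A)$ into a program for $\mathcal U_g$ on the same processor, and multiplicativity and monotonicity of fidelity plus a triangle-type estimate give $cF(\eta^A,\mathcal V_g^A(\eta^A))\le 2\epsilon$ for any input pair of overlap $c$ whose target outputs are orthogonal. Such pairs exist with $c$ arbitrarily close to $\tfrac12\norm{\mathcal U-\mathcal U_g}_\diamond$.

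The differences are minor. You use the Bures-angle triangle inequality where the paper uses the continuity bound $\abs{F(\mu,\nu)-F(\mu,\nu')}\le P(\nu,\nu')$, which follows from it; note that your intermediate bound $\sin(2\arcsin\epsilon)$ fails for $\epsilon>1/\sqrt2$, but there $cF\le 2\epsilon$ holds trivially. You also prove only the needed direction $d(\mathcal U,\mathcal U_g)\ge\tfrac12\norm{\mathcal U-\mathcal U_g}_\diamond$, by an explicit construction, whereas the paper computes $d$ exactly via $\mu(W)$. Finally, you make explicit the antiunitary-with-reference bookkeeping (conjugation on $R$) that the paper leaves implicit.
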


These theorems quantify the amount of asymmetry in the ancilla state $\eta^A$ that is necessary for implementing non-covariant measurements or channels under general symmetry constraints. 
In particular, in the case of exact implementation, 
they recover the qualitative statement that a perfectly asymmetric state satisfying
$F(\eta^A,\mathcal{V}_g^A(\eta^A))=0$ is necessary~\cite{marvianInformationtheoreticAccountWignerArakiYanase2012}. 
In general, the fidelity $F(\eta^A,\mathcal V_g^A(\eta^A))$ quantifies how well the apparatus state can serve as a reference for the symmetry element $g$.
A small fidelity means that the apparatus state is nearly distinguishable from its transformed copy and hence carries strong asymmetry.
Indeed, when ${\cal V}$ is a projective unitary representation, 
its logarithmic quantity $L_g(\eta^A)
    \coloneqq
    -\ln F(\eta^A,\mathcal{V}_g^A(\eta^A))$
is an additive monotone that completely characterizes i.i.d.\ state convertibility for pure states~\cite{shitaraIidStateConvertibility2025,kusukiResourceTheoreticQuantifiersWeak2026}. 
We note that this quantity can take the value $+\infty$ and is therefore not continuous in general; consequently, 
it lies outside the scope of the general resource-theoretic WAY theorem of Ref.~\cite{tajimaUniversalTradeoffRelations2025}.

\paragraph{No-programming mechanism}
We now explain the mechanism behind Theorems~\ref{thm:main_PVM} and~\ref{thm:main_unitary}.
If Eq.~\eqref{eq:implementation_measurement} holds and the implementing measurement is covariant, then Eq.~\eqref{eq:covariant_measurement} implies
\begin{align}
P\!\left(
    \mathcal T_g^\Omega\circ
    \mathbb P^{S\sto\Omega}
    \circ
    \mathcal V_{g^{-1}}^S,
    \mathbb M^{SA\sto\Omega}
    (\bullet\otimes\mathcal V_g^A(\eta^A))
\right)
\le \epsilon .
\label{eq:twisted_implementation_measurement}
\end{align}
Similarly, if Eq.~\eqref{eq:implementation_channel} holds and the implementing channel is covariant, then
\begin{align}
P_e\!\left(
    \mathcal V_g^{S'}
    \circ
    \mathcal E^{S\sto S'}
    \circ
    \mathcal V_{g^{-1}}^S,
    \Lambda^{SA\sto S'}
    (\bullet\otimes\mathcal V_g^A(\eta^A))
\right)
\le \epsilon .
\label{eq:twisted_implementation_channel}
\end{align}
Thus the same processor approximately implements two targets using the two program states
\(
    \eta^A,
    \mathcal V_g^A(\eta^A).
\)
The WAY problem is therefore reduced to a two-target no-programming problem (Fig.~\ref{fig:2}).
\begin{figure}[tbp]
    \centering
    \includegraphics[width=\linewidth]{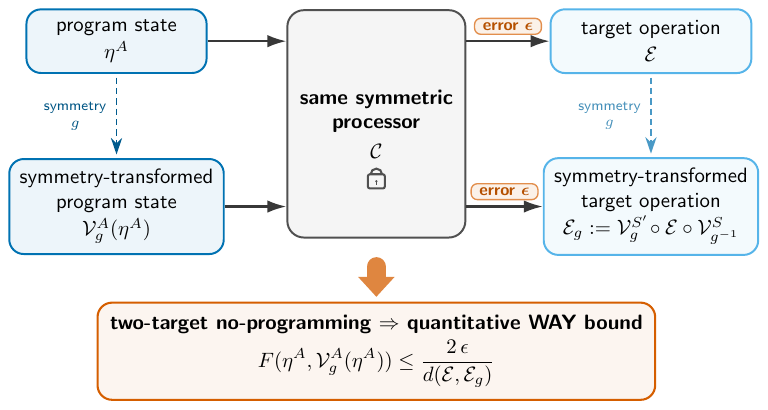}
    \caption{
No-programming origin of the quantitative WAY bound.
By covariance, the same symmetric processor that implements $\mathcal E$
from $\eta^A$ implements
$\mathcal E_g := \mathcal V_g^{S'}\circ\mathcal E\circ\mathcal V_{g^{-1}}^S$
from $\mathcal V_g^A(\eta^A)$.
The two apparatus states therefore serve as programs for two target operations.
A two-target no-programming inequality bounds
$F(\eta^A,\mathcal V_g^A(\eta^A))$ in terms of the implementation error
$\epsilon$ and the distance between $\mathcal E$ and $\mathcal E_g$.
Thus a small error is possible only if the apparatus carries sufficient
asymmetry.}
    \label{fig:2}
\end{figure}

For two measurements $(\mathbb P,\mathbb Q)$, 
let $d(\mathbb P,\mathbb Q)$ quantify their ability to amplify distinguishability: 
it is the largest input fidelity for which the two measurements can produce perfectly distinguishable outcome distributions. For two channels $(\mathcal E,\mathcal F)$, 
$d(\mathcal E,\mathcal F)$ is defined analogously, 
with optimization over reference systems.
Thus $d>0$ means that the two target operations can amplify distinguishability: they can take nonorthogonal inputs to perfectly distinguishable outputs.
For PVMs and unitary gates, we have
\begin{align}
d(\mathbb P,\mathbb Q)&=\sup_E\norm{\mathbb P_E-\mathbb Q_E},\label{eq:dist_PVM}\\
d(\mathcal U,\mathcal V)&=\frac12\norm{\mathcal U-\mathcal V}_\diamond .\label{eq:dist_unitary}
\end{align}
The precise definitions of $d$, as well as the proof of Eqs.~\eqref{eq:dist_PVM} and~\eqref{eq:dist_unitary}, 
are given in the End Matter.

The technical core is the following two-target no-programming bound.
Suppose that the same processor implements two targets $X_1$ and $X_2$ with program states $\eta_1^A$ and $\eta_2^A$, respectively, 
each within error $\epsilon$, 
where $X_i$ denotes either a measurement or a channel.
Then, whenever $d(X_1,X_2)>0$, we have
\begin{equation}
F(\eta_1^A,\eta_2^A)
\le
\frac{2\epsilon}{d(X_1,X_2)} .
\label{eq:NP_general}
\end{equation}
The proof is given in the End Matter.
Using this bound with
$\eta_2^A=\mathcal V_g^A(\eta_1^A)$
and choosing $X_2$ as the corresponding symmetry-transformed target, 
we obtain Theorems~\ref{thm:main_PVM} and~\ref{thm:main_unitary} by applying
Eqs.~\eqref{eq:dist_PVM} and~\eqref{eq:dist_unitary}.
Note that, for unitary gates, this no-programming bound strengthens previously known approximate-programming bounds~\cite{hilleryApproximateProgrammableQuantum2006,majenzEntropyQuantumInformation2017} in certain regimes; 
see Supplemental Material~\cite{SM}.

\paragraph{Discrete symmetry: position measurement on a lattice ring}
As a simple illustration of the theorem, consider the measurement of the position of a particle $S$ on a one-dimensional lattice ring with $L$ sites. 
To measure its position, introduce another particle $A$ on the same ring and use it as a probe.
Assume that the interaction between $S$ and $A$ and the final readout depend only on their relative position.
Equivalently, the implementation is covariant under lattice translations generated by $\tau$ on $S$ and $A$, while the outcome space $\Omega=\{1,\ldots,L\}$ is acted on trivially.

Let $(P_i)_{i=1}^L$ be the position PVM of $S$.
Then, for every nontrivial lattice translation,
\begin{equation}
    \norm{P_i-[(\tau^S)^\dagger]^n(P_i)}=
    \norm{P_i-P_{i+n}}=1\ (1\le n< L),
\end{equation}
where the site labels are understood modulo $L$. 
Theorem~\ref{thm:main_PVM} therefore implies that any implementation with error $\epsilon$ satisfies
\begin{equation}
    F(\eta^A,(\tau^A)^n(\eta^A))\le 2\epsilon\ (1\le n< L).
\end{equation}
Thus, high accuracy requires the probe state $\eta^A$ to be almost perfectly distinguishable from all of its nontrivial lattice translations. 
In other words, the ancilla must itself provide a sharply localized reference for the absolute position on the ring. 
This model can be regarded as a discrete analogue of the limitation on position measurements under momentum conservation~\cite{ozawaDoesConservationLaw1991,
buschPositionMeasurementsObeying2011,
kuramochiWignerArakiYanaseTheoremContinuous2023},
and shows that a tradeoff between the localization of the ancilla and the measurement accuracy persists even in the absence of a continuous symmetry.

\paragraph{Antiunitary symmetry: imaginarity and universality transformation}
The theorem also applies to antiunitary symmetries.
Consider complex-conjugation symmetry in a fixed computational basis.
Operations covariant under this symmetry are real operations, and the corresponding resource is called imaginarity~\cite{wuResourceTheoryImaginarity2021}.
This setting appears in universality transformations, where one asks whether a gate set consisting only of real orthogonal gates, such as one generated by $\{H,CCZ\}$, can become universal when supplemented with a complex ancillary state~\cite{takeuchiCatalyticTransformationComputationally2024,nakayamaUniquenessImaginarityassistedTransformation2026}.

Let $\mathcal U=\mathcal U^{S\to S'}$ be a target unitary gate and suppose that it is implemented with error $\epsilon$ by a complex-conjugation covariant channel $\Lambda^{SA\to S'}$ using an ancillary state $\eta^A$.
Then Theorem~\ref{thm:main_unitary} gives
\begin{equation}
    F(\eta^A,(\eta^A)^*)
    \le
    \frac{4\epsilon}{
        \norm{\mathcal U-\mathcal U^*}_\diamond
    },
    \label{eq:complex_conjugation_bound}
\end{equation}
where $\mathcal U^*$ denotes the complex conjugate of $\mathcal U$ in the computational basis. 
Thus, implementing a unitary that is far from real requires an ancillary state that is nearly distinguishable from its complex conjugate, 
which quantifies the no-go result in Ref.~\cite{nakayamaUniquenessImaginarityassistedTransformation2026}. 

For pure ancillary states, the above bound is tight in its scaling.
More precisely, for any fixed target unitary satisfying 
\(\mathcal U \ne \mathcal U^*\), the minimum achievable error scales as $\Theta\left(F(\eta^A,(\eta^A)^*)\right)$. 
To see this, note that~\cite{wuResourceTheoryImaginarity2021} complex-conjugation covariant operations can transform
$\eta^A$ into the maximally imaginary state
$\ket{i}\coloneqq\frac{\ket{0}+i\ket{1}}{\sqrt 2}$
with error
$
\sqrt{\frac{1}{2}
\left[
    1-P\!\left(\eta^A,(\eta^A)^*\right)
\right]}
\in
O\!\left(F\!\left(\eta^A,(\eta^A)^*\right)\right)$.
Since \(\ket{i}\) is sufficient to implement any unitary gate~\cite{takeuchiCatalyticTransformationComputationally2024,nakayamaUniquenessImaginarityassistedTransformation2026}, this gives
an achievable protocol whose error has the same scaling as the lower bound.

\paragraph{Applications to symmetries generated by unbounded observables}
Although the preceding motivation focuses on going beyond continuous generators,
our framework also covers continuous symmetries, including those generated by
unbounded operators. In particular, Theorems~\ref{thm:main_PVM}
and~\ref{thm:main_unitary} apply without any domain assumptions.
In the exact case $\epsilon=0$, they therefore reproduce
the recently established qualitative WAY theorem
for continuous symmetries including unbounded conserved quantities~\cite{kuramochiWignerArakiYanaseTheoremContinuous2023}.
We note that the denominator on the right-hand side of Eq.~\eqref{eq:main_PVM} is positive along some sequence $\{t_n\}$ satisfying $t_n\to0$, 
provided that $\mathbb{P}$ does not commute with $V_t^S=e^{it\hat{L}^S}$ for some $t\in\mathbb{R}$. 
Otherwise, there would exist some $t_0>0$ such that $[\mathbb{P}_E,V_t^S]=0$ for all $t\in(-t_0,t_0)$ and all $E$. 
This would imply that $\mathbb{P}_E$ commutes with $V_t^S$ for all $t\in\mathbb{R}$ and all $E$, contradicting the assumption.
The same argument applies to the unitary WAY case.
For general $\epsilon$, our theorems provide a quantitative refinement of 
Ref.~\cite{kuramochiWignerArakiYanaseTheoremContinuous2023}.

The unboundedness of the generator may in fact prohibit even
approximate implementation.
Consider, for instance, the position measurement of a one-dimensional
particle on the line under momentum covariance, more precisely,
$G=\R$, $V_s^{X}=e^{-is\hat{p}^X}$, and $T_s^\R=\id_\R$.
For any $s\neq0$,
\begin{equation}
    \norm{\mathbb{P}^{S\to \R}_{[0,\infty)}
    -e^{-is\hat{p}^S}
    \mathbb{P}^{S\to \R}_{[0,\infty)}
    e^{is\hat{p}^S}}
    =
    \norm{\mathbb{P}^{S\to \R}_{[0,\infty)}
    -\mathbb{P}^{S\to \R}_{[-s\hbar,\infty)}}
    =1 .
\end{equation}
On the other hand,
$F(\eta^A,\mathcal{V}_s^A(\eta^A))\to1$ as $s\to0$.
Hence Eq.~\eqref{eq:main_PVM} can hold only if
$\epsilon\ge 1/2$.

The same phenomenon occurs for the projective measurement of a
quadrature amplitude of a single-mode optical field under conservation
of the total photon number; see Supplemental Material~\cite{SM}.
The corresponding qualitative obstruction was discussed in
Ref.~\cite{kuramochiWignerArakiYanaseTheoremContinuous2023}.

These examples show that no approximate implementation can be uniformly
accurate over all input states.
For the latter example, this is consistent with the fact that homodyne measurements, 
although known to be asymptotically exact for each fixed input state~\cite{yuenQuantumStatisticsHomodyne1978}, 
have an error that increases with the photon number of the 
measured state~\cite{braunsteinHomodyneStatistics1990,tycOperationalFormulationHomodyne2004}.
To obtain a meaningful error measure in such cases, one must use a
state-dependent measure, 
namely one defined by restricting the
input states to a subset of the state space, 
such as energy-constrained
states.
Our no-programming theorem and WAY theorem admit a formal extension to
this setting as well; 
see Supplemental Material~\cite{SM}.

\paragraph{Conclusion and outlook}
We have derived quantitative WAY-type bounds for symmetry-breaking projective measurements and unitary gates under arbitrary unitary and antiunitary symmetries.
The bounds show that accurate implementation requires the apparatus state to be distinguishable from its symmetry-transformed copy.
This replaces the generator variance used in conventional continuous-symmetry WAY bounds by an operational asymmetry measure that remains meaningful for discrete and antiunitary symmetries.

The proof rests on a two-target no-programming inequality: two distinguishability-amplifying target operations cannot be approximately implemented by the same processor unless the corresponding program states are sufficiently distinguishable.
Applied to covariant processors, this directly yields quantitative WAY tradeoffs.

Several questions remain open.
One is to construct protocols that achieve the bounds, or determine their optimal constants, for general symmetries.
Another is to develop state-dependent versions suited to unbounded generators and energy-constrained inputs.
It would also be interesting to understand whether the connection between WAY-type limitations and covariant quantum error correction, known for continuous symmetries~\cite{tajimaUniversalTradeoffStructure2022}, extends to discrete and antiunitary symmetries.

\paragraph{Acknowledgments}
The authors thank Yui Kuramochi for helpful comments on an earlier draft of the manuscript.
A.H. was supported by KAKENHI Grant
No. JP25KJ0833 from the Japan Society for the
Promotion of Science (JSPS) and FoPM, a WINGS Program, 
the University of Tokyo.
A.H. also acknowledges support from JSR Fellowship, the University of Tokyo.
H.T. was supported by JSPS Grants-in-Aid for Scientific Research No. JP25K00924, MEXT KAKENHI Grant-in-Aid for Transformative
Research Areas B ``Quantum Energy Innovation” Grant Numbers 24H00830 and 24H00831, JST FOREST No. JPMJFR2365, JST MOONSHOT No. JPMJMS256E and Royal Society International Collaboration Awards 2025 Flexigrant number ICA/R2/252240.

\bibliographystyle{apsrev4-2-titles-revised}
\bibliography{WAY,suppl}

\clearpage
\section*{END MATTER}
\subsection{Precise definitions of $d(\mathbb P,\mathbb Q),d(\mathcal E,\mathcal F)$}
For a pair of measurements $(\mathbb P,\mathbb Q)$ and a pair of channels $(\mathcal E,\mathcal F)$, define
\begin{align}
 \hat F(\lambda;\mathbb P,\mathbb Q)
 &\coloneqq
 \inf_{\substack{\rho,\sigma\\ F(\rho,\sigma)\ge \lambda}}
 F_{\mathrm{BC}}\!\left(\mathbb P(\rho),\mathbb Q(\sigma)\right),
 \label{eq:Fhat_measurement}
 \\
 \hat F_e(\lambda;\mathcal E,\mathcal F)
 &\coloneqq
 \inf_{\substack{R,\rho,\sigma\\ F(\rho,\sigma)\ge \lambda}}
 F\!\left(
 (\mathcal E\otimes\id^R)(\rho),
 (\mathcal F\otimes\id^R)(\sigma)
 \right),
 \label{eq:Fhat_channel}
\end{align}
and
\begin{align}
 d(\mathbb P,\mathbb Q)
 &\coloneqq
 \sup\!\left\{
 \lambda\in[0,1]\ \middle|\
 \hat F(\lambda;\mathbb P,\mathbb Q)=0
 \right\},
 \\
 d(\mathcal E,\mathcal F)
 &\coloneqq
 \sup\!\left\{
 \lambda\in[0,1]\ \middle|\
 \hat F_e(\lambda;\mathcal E,\mathcal F)=0
 \right\}.
 \label{eq:orthogonality_amplification}
\end{align}

\subsection{Proof of Eq.~\eqref{eq:dist_PVM}}

We first recall the following elementary identity.
For any two projections $P$ and $Q$,
\begin{equation}
    \norm{P-Q}
    =
    \max\{\norm{P(I-Q)},\norm{(I-P)Q}\}.
\end{equation}
Indeed,
\begin{equation}
    (P-Q)^2
    =
    P(I-Q)^2P+(I-P)Q^2(I-P),
\end{equation}
and the two terms on the right-hand side have orthogonal supports. Hence
\begin{align}
    \norm{P-Q}^2
    &=
    \norm{P(I-Q)^2P+(I-P)Q^2(I-P)}
    \nonumber\\
    &=
    \max\{\norm{P(I-Q)^2P},\norm{(I-P)Q^2(I-P)}\}
    \nonumber\\
    &=
    \max\{\norm{P(I-Q)},\norm{(I-P)Q}\}^2 .
\end{align}
Applying this identity to $\mathbb{P}_E$ and $\mathbb{Q}_E$, 
we obtain
\begin{equation}
    \sup_E \norm{\mathbb{P}_E-\mathbb{Q}_E}
    =
    \sup_E \norm{\mathbb{P}_E\mathbb{Q}_{\Omega\setminus E}} .
\end{equation}

We next prove that the right-hand side is equal to
$d(\mathbb{P},\mathbb{Q})$.
For each (measurable) subset $E\subset\Omega$,
\begin{align}
    \norm{\mathbb{P}_E\mathbb{Q}_{\Omega\setminus E}}
    &=
    \sup_{\|\psi\|=\|\phi\|=1}
    \abs{\bra{\psi}\mathbb{P}_E\mathbb{Q}_{\Omega\setminus E}\ket{\phi}}
    \nonumber\\
    &=
    \sup_{\substack{\|\psi\|=\|\phi\|=1\\
    \mathbb{P}_E\ket{\psi}=\ket{\psi},\;
    \mathbb{Q}_{\Omega\setminus E}\ket{\phi}=\ket{\phi}}}
    \abs{\braket{\psi}{\phi}},
\end{align}
where the second supremum is understood to be zero if the constraint set is
empty. If
$\mathbb{P}_E\ket{\psi}=\ket{\psi}$ and
$\mathbb{Q}_{\Omega\setminus E}\ket{\phi}=\ket{\phi}$, then
\begin{equation}
    F_{\mathrm{BC}}\bigl(\mathbb{P}(\psi),\mathbb{Q}(\phi)\bigr)=0 .
\end{equation}
Therefore,
\begin{equation}
    \hat{F}\bigl(\abs{\braket{\psi}{\phi}};\mathbb{P},\mathbb{Q}\bigr)=0,
\end{equation}
and hence
\begin{equation}
    \sup_E \norm{\mathbb{P}_E\mathbb{Q}_{\Omega\setminus E}}
    \le
    d(\mathbb{P},\mathbb{Q}) .
\end{equation}

Conversely, suppose that
$\hat{F}(\lambda;\mathbb{P},\mathbb{Q})=0$.
Then, for each $n\ge 1$, there exist states $\rho_n$ and $\sigma_n$ such that
\begin{equation}
    F(\rho_n,\sigma_n)\ge \lambda,
    \qquad
    F_{\mathrm{BC}}\bigl(\mathbb{P}(\rho_n),\mathbb{Q}(\sigma_n)\bigr)
    <
    \frac{1}{4n^2}.
\end{equation}
By Le Cam's inequality~\cite{lecamContiguityHellingerTransforms1990,tsybakovLowerBoundsMinimax2009},
\begin{equation}
    1-F_{\mathrm{BC}}(\mu,\nu)
    \le
    \sup_E \{\mu(E)-\nu(E)\},
\end{equation}
there exists a measurable set $E_n$ such that
\begin{align}
    \Tr(\mathbb{P}_{E_n}\rho_n)
    &>
    1-\frac{1}{4n^2},\\
    \Tr(\mathbb{Q}_{E_n}\sigma_n)
    &<
    \frac{1}{4n^2}.
\end{align}
By Uhlmann's theorem~\cite{uhlmannTransitionProbabilityState1976}, we can choose purifications
$\ket{\Psi_n}$ and $\ket{\Phi_n}$ of $\rho_n$ and $\sigma_n$, respectively,
such that
\begin{equation}
    \abs{\braket{\Psi_n}{\Phi_n}}
    =
    F(\rho_n,\sigma_n).
\end{equation}
Thus,
\begin{align}
    \lambda&\le \abs{\braket{\Psi_n}{\Phi_n}}\nonumber\\
    &\le \abs{\braket[3]{\Psi_n}{\mathbb{P}_{E_n}\mathbb{Q}_{\Omega\setminus E_n}\otimes I}{\Phi_n}}+\frac{1}{n}\nonumber\\
    &\le \sup_E\norm{\mathbb{P}_E\mathbb{Q}_{\Omega\setminus E}}+\frac{1}{n}.
\end{align}
Taking $n\to\infty$ gives
\begin{equation}
    \lambda
    \le
    \sup_E \norm{\mathbb{P}_E\mathbb{Q}_{\Omega\setminus E}} .
\end{equation}
Finally, taking the supremum over all
$\lambda$ satisfying
$\hat{F}(\lambda;\mathbb{P},\mathbb{Q})=0$, we obtain
\begin{equation}
    d(\mathbb{P},\mathbb{Q})
    \le
    \sup_E \norm{\mathbb{P}_E\mathbb{Q}_{\Omega\setminus E}} .
\end{equation}
Combining the two inequalities proves Eq.~\eqref{eq:dist_PVM}.

\subsection{Proof of Eq.~\eqref{eq:dist_unitary}}

We first note that, by purification and Uhlmann's theorem~\cite{uhlmannTransitionProbabilityState1976},
\begin{equation}
    \hat{F}_e(\lambda;\mathcal{E},\mathcal{F})
    =
    \inf_R
    \inf_{\substack{\psi,\phi:\ {\rm pure}\\
    \abs{\braket{\psi}{\phi}}\ge \lambda}}
    F\bigl(
        \mathcal{E}\otimes\id^R(\psi),
        \mathcal{F}\otimes\id^R(\phi)
    \bigr).
\end{equation}
In particular,
\begin{align}
    \hat{F}_e(\lambda;\id,\mathcal{F})
    &=
    \inf_R
    \inf_{\substack{\psi,\phi:\ {\rm pure}\\
    \abs{\braket{\psi}{\phi}}\ge \lambda}}
    F\bigl(
        \psi,
        \mathcal{F}\otimes\id^R(\phi)
    \bigr)
    \nonumber\\
    &=
    \left[
    \inf_R
    \inf_{\substack{\psi,\phi:\ {\rm pure}\\
    \abs{\braket{\psi}{\phi}}\ge \lambda}}
    \bra{\psi}
        \mathcal{F}\otimes\id^R(\phi)
    \ket{\psi}
    \right]^{1/2}.
    \label{eq:fidelity_unitary}
\end{align}

We now prove Eq.~\eqref{eq:dist_unitary}. Since
\begin{equation}
    d(\mathcal{U},\mathcal{V})
    =
    d(\id,\mathcal{U}^{-1}\circ\mathcal{V}),
\end{equation}
and since the diamond norm is invariant under unitary pre- and post-processing,
it suffices to consider the case $\mathcal{U}=\id$.
Let $\mathcal{V}(\cdot)=V(\cdot)V^\dagger$. Then
Eq.~\eqref{eq:fidelity_unitary} gives
\begin{equation}
    \hat{F}_e(\lambda;\id,\mathcal{V})
    =
    \inf_R
    \inf_{\substack{\psi,\phi:\ {\rm pure}\\
    \abs{\braket{\psi}{\phi}}\ge \lambda}}
    \abs{
        \bra{\psi}
        V\otimes I^R
        \ket{\phi}
    } .
\end{equation}

We next evaluate the inner infimum. For a unitary $W$, define
\begin{equation}
    \mu(W)
    \coloneqq
    \inf_{\phi:\ {\rm pure}}
    \abs{\bra{\phi}W\ket{\phi}} .
\end{equation}
Then
\begin{align}
    &\inf_{\substack{\psi,\phi:\ {\rm pure}\\
    \abs{\braket{\psi}{\phi}}\ge \lambda}}
    \abs{\bra{\psi}W\ket{\phi}}
    \nonumber\\
    &=
    \inf_{\substack{\phi,\eta:\ {\rm pure}\\
    \braket{\phi}{\eta}=0}}
    \inf_{t\ge \lambda}
    \abs{
        \bigl(t\bra{\phi}+\sqrt{1-t^2}\bra{\eta}\bigr)
        W
        \ket{\phi}
    }
    \nonumber\\
    &=
    \inf_{\phi:\ {\rm pure}}
    \inf_{t\ge \lambda}
    \left[
        t\abs{\bra{\phi}W\ket{\phi}}
        -
        \sqrt{1-t^2}
        \sqrt{1-\abs{\bra{\phi}W\ket{\phi}}^2}
    \right]_+
    \nonumber\\
    &=
    \left[
        \lambda\mu(W)
        -
        \sqrt{1-\lambda^2}
        \sqrt{1-\mu(W)^2}
    \right]_+ ,
\end{align}
where $[x]_+\coloneqq \max\{x,0\}$.
Applying this identity to $W=V\otimes I^R$, we obtain
\begin{align}
    &\hat{F}_e(\lambda;\mathcal{U},\mathcal{V})=0\nonumber\\
    &\Leftrightarrow\lambda\inf_R\mu(V\otimes I^R)\le\sqrt{1-\lambda^2}\sqrt{1-\inf_R\mu(V\otimes I^R)^2}\nonumber\\
    &\Leftrightarrow
    \lambda\le \sqrt{1-\inf_R\mu(V\otimes I^R)^2}.
\end{align}
Therefore,
\begin{equation}
    d(\id,\mathcal{V})
    =
    \sup_R
    \sqrt{1-\mu(V\otimes I^R)^2}.
\end{equation}

On the other hand,
\begin{align}
    \sup_R
    \sqrt{1-\mu(V\otimes I^R)^2}
    &=
    \sup_R
    \sup_{\phi:\ {\rm pure}}
    \sqrt{
        1-
        \abs{\bra{\phi}V\otimes I^R\ket{\phi}}^2
    }
    \nonumber\\
    &=
    \frac{1}{2}
    \sup_R
    \sup_{\phi:\ {\rm pure}}
    \norm{
        \phi-
        \mathcal{V}\otimes\id^R(\phi)
    }_1
    \nonumber\\
    &=
    \frac{1}{2}
    \norm{\id-\mathcal{V}}_\diamond .
\end{align}
Combining this with the reduction to $\mathcal{U}=\id$ proves
Eq.~\eqref{eq:dist_unitary}.

\subsection{Proof of Eq.~\eqref{eq:NP_general}}
We show the following stronger statements. 
\begin{thm}
\label{thm:NP_measurement}
Suppose that, for each $i=1,2$, the pair
$(\mathbb{M}^{SA\sto\Omega},\eta^A_i)$ implements the measurement
$\mathbb{P}^{S\sto\Omega}_i$ with error at most $\epsilon$.
Then
\begin{equation}
 F(\eta^A_1,\eta^A_2)
 \le
 \inf_{\lambda>0}
 \frac{
 \hat{F}(\lambda;\mathbb{P}^{S\sto\Omega}_1,
 \mathbb{P}^{S\sto\Omega}_2)+2\epsilon
 }{\lambda}.
 \label{eq:pre_NP_measurement}
\end{equation}
In particular, if
$d(\mathbb{P}^{S\sto\Omega}_1,\mathbb{P}^{S\sto\Omega}_2)>0$, then
we have
\begin{equation}
 F(\eta^A_1,\eta^A_2)
 \le
 \frac{2\epsilon}{
 d(\mathbb{P}^{S\sto\Omega}_1,\mathbb{P}^{S\sto\Omega}_2)
 } .
 \label{eq:NP_measurement_EM}
\end{equation}
\end{thm}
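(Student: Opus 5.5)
Fix $\lambda>0$ and states $\rho,\sigma$ on $S$ with $F(\rho,\sigma)\ge\lambda$. The plan is to compare $F_{\mathrm{BC}}(\mathbb P_1(\rho),\mathbb P_2(\sigma))$ with $F(\eta_1^A,\eta_2^A)$ through the common processor $\mathbb M$. Concretely, the approach is to obtain
\begin{equation}
F_{\mathrm{BC}}\bigl(\mathbb M(\rho\otimes\eta_1^A),\mathbb M(\sigma\otimes\eta_2^A)\bigr)\ \ge\ F(\rho\otimes\eta_1^A,\sigma\otimes\eta_2^A)\ =\ F(\rho,\sigma)F(\eta_1^A,\eta_2^A)\ \ge\ \lambda F(\eta_1^A,\eta_2^A),
\end{equation}
and then to replace the outputs of $\mathbb M$ by the target distributions at a cost of $\epsilon$ each. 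The first inequality is monotonicity of fidelity under the measurement channel, since the Bhattacharyya coefficient of the output distributions is the fidelity of the corresponding commuting classical states. The equality is multiplicativity of fidelity on tensor products.

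For the replacement step, I use that $A(\mu,\nu)=\arccos F_{\mathrm{BC}}(\mu,\nu)$ (the Bures/Hellinger angle) is a metric, or equivalently that the purified distance satisfies a triangle inequality. A cruder bound suffices, and it is where the constant $2\epsilon$ comes from. The implementation hypothesis gives, for every input, $P(\mathbb P_i(\tau),\mathbb M(\tau\otimes\eta_i^A))\le\epsilon$. I want a perturbation bound of the form
\begin{equation}
\abs{F_{\mathrm{BC}}(\mu,\nu)-F_{\mathrm{BC}}(\mu',\nu)}\le P(\mu,\mu').
\end{equation}
To prove it, write $F_{\mathrm{BC}}(\mu,\nu)=\langle\sqrt\mu,\sqrt\nu\rangle$ in $L^2$ of a common dominating measure. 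The error $\abs{\langle\sqrt\mu-\sqrt{\mu'},\sqrt\nu\rangle}$ is at most $\norm{\sqrt\mu-\sqrt{\mu'}}_2=\sqrt{2(1-F_{\mathrm{BC}}(\mu,\mu'))}$. That quantity is not quite $P$. The fix is to use instead the known inequality $\abs{F(\rho,\sigma)-F(\rho',\sigma)}\le P(\rho,\rho')$, which follows from the triangle inequality for the angle $\arccos F$ together with $\cos a-\cos(a+b)\le\sin b$ for $a,b\in[0,\pi/2]$ with $a+b\le\pi/2$. Equivalently, $\abs{\cos a-\cos a'}\le\sin\abs{a-a'}$ when $a,a'\in[0,\pi/2]$, and this last form is what I will actually check.

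Applying the perturbation bound twice, once replacing $\mathbb M(\rho\otimes\eta_1^A)$ by $\mathbb P_1(\rho)$ and once replacing $\mathbb M(\sigma\otimes\eta_2^A)$ by $\mathbb P_2(\sigma)$, gives
\begin{equation}
\lambda F(\eta_1^A,\eta_2^A)\le F_{\mathrm{BC}}(\mathbb P_1(\rho),\mathbb P_2(\sigma))+2\epsilon.
\end{equation}
Taking the infimum over admissible $\rho,\sigma$ yields $\hat F(\lambda;\mathbb P_1,\mathbb P_2)+2\epsilon$ on the right. Dividing by $\lambda$ and taking the infimum over $\lambda>0$ proves Eq.~\eqref{eq:pre_NP_measurement}.

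For Eq.~\eqref{eq:NP_measurement_EM}, note that $\hat F(\lambda)$ is nondecreasing in $\lambda$, since the constraint set shrinks as $\lambda$ grows. Hence $\hat F(\lambda)=0$ for every $\lambda<d$. Letting $\lambda\uparrow d$ gives $F\le 2\epsilon/d$. If the supremum defining $d$ is attained, one can plug in $\lambda=d$ directly.

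The main obstacle is making the perturbation step rigorous for general measurable $\Omega$ and the infinite-dimensional setting. The triangle inequality for the Hellinger angle between probability measures has to hold there, which it does via the $L^2$ embedding $\mu\mapsto\sqrt{d\mu/d\kappa}$ onto the unit sphere, where the angle is the spherical geodesic distance. Multiplicativity and data processing of fidelity for density operators on separable Hilbert spaces can be assumed as standard.
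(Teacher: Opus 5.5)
Your proposal is correct and follows the paper's proof essentially step for step. It uses data processing plus multiplicativity of fidelity, then the continuity bound $|F_{\mathrm{BC}}(\mu,\nu)-F_{\mathrm{BC}}(\mu',\nu)|\le P(\mu,\mu')$ once per argument to get the $2\epsilon$, then optimization over $\rho,\sigma$ with $F(\rho,\sigma)\ge\lambda$, and finally $\lambda\uparrow d$. The only addition is that you justify the continuity bound via the triangle inequality for the angle $\arccos F_{\mathrm{BC}}$, which the paper uses without proof.
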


\begin{thm}
\label{thm:NP_channel}
Suppose that, for each $i=1,2$, the pair
$(\Lambda^{SA\sto S'},\eta^A_i)$ implements the channel
$\mathcal{E}^{S\sto S'}_i$ with error at most $\epsilon$.
Then
\begin{equation}
 F(\eta^A_1,\eta^A_2)
 \le
 \inf_{\lambda>0}
 \frac{
 \hat{F}_e(\lambda;\mathcal{E}^{S\sto S'}_1,
 \mathcal{E}^{S\sto S'}_2)+2\epsilon
 }{\lambda}.
 \label{eq:pre_NP_channel}
\end{equation}
In particular, if
$d(\mathcal{E}^{S\sto S'}_1,\mathcal{E}^{S\sto S'}_2)>0$, then
we have 
\begin{equation}
 F(\eta^A_1,\eta^A_2)
 \le
 \frac{2\epsilon}{
 d(\mathcal{E}^{S\sto S'}_1,\mathcal{E}^{S\sto S'}_2)
 } .
 \label{eq:NP_channel_EM}
\end{equation}
\end{thm}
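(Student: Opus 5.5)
The proof will show that, for every admissible input pair, the fidelity of the two implemented outputs is at least $\lambda F(\eta^A_1,\eta^A_2)$, and that it is at most $2\epsilon$ away from the fidelity of the two target outputs. Fix a reference system $R$ and states $\rho,\sigma$ on $SR$ with $F(\rho,\sigma)\ge\lambda$. Write $\Lambda_i\coloneqq\Lambda^{SA\sto S'}(\bullet\otimes\eta^A_i)$ for $i=1,2$.

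\emph{Step 1 (lower bound on the implemented outputs).} Fidelity is multiplicative under tensor products, so
\begin{equation}
F(\rho\otimes\eta^A_1,\sigma\otimes\eta^A_2)=F(\rho,\sigma)F(\eta^A_1,\eta^A_2)\ge\lambda F(\eta^A_1,\eta^A_2).
\end{equation}
Monotonicity of fidelity under the single channel $\Lambda^{SA\sto S'}\otimes\id^R$ then gives
\begin{equation}
F\bigl((\Lambda_1\otimes\id^R)(\rho),(\Lambda_2\otimes\id^R)(\sigma)\bigr)\ge\lambda F(\eta^A_1,\eta^A_2).
\end{equation}
This is where using the same processor for both programs is essential: both outputs come from one channel acting on the joint input.

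\emph{Step 2 (one-sided continuity of fidelity).} I would prove the inequality
\begin{equation}
F(a,c)\ge F(b,c)-P(a,b)
\end{equation}
for states $a,b,c$. The proof uses the angle $A(\cdot,\cdot)\coloneqq\arccos F(\cdot,\cdot)$, which is a metric. Set $x=A(a,b)$ and $y=A(b,c)$. If $x+y\ge\pi/2$, then $\cos y\le\sin x$, so the right-hand side is nonpositive and there is nothing to prove. Otherwise the triangle inequality gives $F(a,c)\ge\cos(x+y)$, and it remains to show
\begin{equation}
f(y)\coloneqq\cos(x+y)-\cos y+\sin x\ge0\qquad\text{for } 0\le y\le\pi/2-x .
\end{equation}
The function $f$ is convex on this range, because $f''(y)=\cos y-\cos(x+y)\ge0$. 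At the right endpoint $y=\pi/2-x$ we have $f=0$ and $f'=\cos x-1\le0$. A convex function that is nonincreasing at the right endpoint of an interval and vanishes there is nonnegative on the whole interval, so $f\ge0$. The same inequality holds verbatim for the Bhattacharyya coefficient, with the classical angle and the classical purified distance $\sqrt{1-F_{\mathrm{BC}}^2}$. This step carries the constant: it is what makes each implementation error cost only $\epsilon$ rather than $2\epsilon$, and it is the part I expect to be the main obstacle to get right.

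\emph{Step 3 (combine).} The hypothesis $P_e(\mathcal E_i,\Lambda_i)\le\epsilon$, specialized to the given $R$ and the inputs $\rho$ and $\sigma$, bounds the purified distance between $(\mathcal E_1\otimes\id^R)(\rho)$ and $(\Lambda_1\otimes\id^R)(\rho)$ by $\epsilon$, and likewise for $\mathcal E_2,\Lambda_2$ on $\sigma$. Applying Step 2 once on each side to the bound of Step 1 gives
\begin{equation}
F\bigl((\mathcal E_1\otimes\id^R)(\rho),(\mathcal E_2\otimes\id^R)(\sigma)\bigr)\ge\lambda F(\eta^A_1,\eta^A_2)-2\epsilon .
\end{equation}
Taking the infimum over $R,\rho,\sigma$ yields $\lambda F(\eta^A_1,\eta^A_2)\le\hat F_e(\lambda;\mathcal E_1,\mathcal E_2)+2\epsilon$. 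Dividing by $\lambda>0$ and taking the infimum over $\lambda$ gives Eq.~\eqref{eq:pre_NP_channel}.

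For the corollary Eq.~\eqref{eq:NP_channel_EM}, note that $\hat F_e(\lambda)$ is nondecreasing in $\lambda$, since the constraint set shrinks as $\lambda$ grows. Hence $\hat F_e(\lambda)=0$ for every $0<\lambda<d(\mathcal E_1,\mathcal E_2)$. For such $\lambda$, Eq.~\eqref{eq:pre_NP_channel} gives $F(\eta^A_1,\eta^A_2)\le2\epsilon/\lambda$, and letting $\lambda\uparrow d$ gives $F(\eta^A_1,\eta^A_2)\le2\epsilon/d(\mathcal E_1,\mathcal E_2)$.

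The measurement version, Theorem~\ref{thm:NP_measurement}, follows the same steps with three changes. No reference system is needed. Step 1 uses monotonicity of fidelity under the measurement $\mathbb M^{SA\sto\Omega}$, which maps quantum fidelity to a lower bound on $F_{\mathrm{BC}}$. Step 2 is applied to the Bhattacharyya coefficient, using the measurement error $P$ on the inputs $\rho$ and $\sigma$ separately.
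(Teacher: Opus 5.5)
Your proposal is correct and follows essentially the same route as the paper. Both arguments use multiplicativity of fidelity together with monotonicity under the single shared processor, then apply the continuity bound $|F(\mu,\nu)-F(\mu,\nu')|\le P(\nu,\nu')$ once per argument, then optimize over inputs with $F(\rho,\sigma)\ge\lambda$ and let $\lambda\uparrow d$; the only difference is that you derive the needed one-sided continuity bound yourself from the triangle inequality for the angle $\arccos F$, which the paper simply invokes, and that derivation is sound.
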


\begin{proof}[Proof of Theorems~\ref{thm:NP_measurement}
and~\ref{thm:NP_channel}]
We first prove Theorem~\ref{thm:NP_measurement}.
For any states $\rho$ and $\sigma$, monotonicity of fidelity gives
\begin{align}
 &F(\rho,\sigma)F(\eta^A_1,\eta^A_2)
 =
 F(\rho\otimes\eta^A_1,\sigma\otimes\eta^A_2)
 \nonumber\\
 &\le
 F_{\mathrm{BC}}\!\left(
 \mathbb{M}^{SA\sto\Omega}(\rho\otimes\eta^A_1),
 \mathbb{M}^{SA\sto\Omega}(\sigma\otimes\eta^A_2)
 \right).
\end{align}
Using the continuity bound
\begin{equation}
 \left|
 F_{\mathrm{BC}}(\mu,\nu)
 -
 F_{\mathrm{BC}}(\mu,\nu')
 \right|
 \le
 P(\nu,\nu')
\end{equation}
in both arguments, together with the assumed implementation errors, we obtain
\begin{align}
 F(\rho,\sigma)F(\eta^A_1,\eta^A_2)
 &\le
 F_{\mathrm{BC}}\left(
 \mathbb{P}^{S\sto\Omega}_1(\rho),
 \mathbb{P}^{S\sto\Omega}_2(\sigma)
 \right)
 +2\epsilon .
\end{align}
Dividing by $F(\rho,\sigma)$ and optimizing over all
$\rho,\sigma$ satisfying $F(\rho,\sigma)\ge\lambda$ yields
Eq.~\eqref{eq:pre_NP_measurement}.
If
$d(\mathbb{P}^{S\sto\Omega}_1,\mathbb{P}^{S\sto\Omega}_2)>0$,
then, for every
$\lambda<d(\mathbb{P}^{S\sto\Omega}_1,\mathbb{P}^{S\sto\Omega}_2)$
with
$\hat{F}(\lambda;\mathbb{P}^{S\sto\Omega}_1,
\mathbb{P}^{S\sto\Omega}_2)=0$,
Eq.~\eqref{eq:pre_NP_measurement} gives
$F(\eta^A_1,\eta^A_2)\le 2\epsilon/\lambda$.
Taking $\lambda$ upward to the supremum gives
Eq.~\eqref{eq:NP_measurement_EM}.

The proof of Theorem~\ref{thm:NP_channel} is identical, with
$F$ and $\hat{F}_e$ replacing $F_{\mathrm{BC}}$ and $\hat{F}$, and with the
optimization also taken over an arbitrary reference system $R$.
\end{proof}

\clearpage\clearpage
\makeatletter
   	\c@secnumdepth=4
\makeatother

\setcounter{equation}{0}
\setcounter{figure}{0}
\setcounter{section}{0}
\setcounter{table}{0}
\setcounter{thm}{0} 
\renewcommand{\theequation}{S\arabic{equation}}
\renewcommand{\thefigure}{S\arabic{figure}}
\renewcommand{\theHequation}{\theequation}
\renewcommand{\theHfigure}{\thefigure}
\renewcommand{\thethm}{S\arabic{thm}}
\renewcommand{\thesection}{S\arabic{section}}

\renewcommand{\bibnumfmt}[1]{[S#1]}
\renewcommand{\citenumfont}[1]{S#1}

\renewcommand{\thepage}{S\arabic{page}}
\setcounter{page}{1}

\makeatletter
\let\orig@footnote\footnote
\long\def\SM@pagefootnote#1{%
  \refstepcounter{footnote}%
  \begingroup
    \protected@xdef\@thefnmark{\thefootnote}%
  \endgroup
  \@footnotemark
  \@footnotetext{#1}%
}
\let\footnote\SM@pagefootnote
\makeatother

\title{
  Supplemental Material for \protect\\
  ``\nameoftitle''
}

\maketitle
\onecolumngrid

This Supplemental Material complements the main text in three respects.
First, we compare the no-programming inequality for unitary channels used in
the main text with existing approximate no-programming bounds~\citeSM{SM_hilleryApproximateProgrammableQuantum2006,SM_majenzEntropyQuantumInformation2017}, clarifying the relations among the
different error measures. Second, we analyze the projective measurement of a
quadrature amplitude of a single-mode optical field under photon-number
conservation. As in the position-measurement example in the main text, this
measurement cannot be implemented, even approximately, with uniformly small
error. Third, we formulate a restricted-input version of the no-programming-type
bound and derive the corresponding WAY-type statement.

\section{Comparison of no-programming inequalities for unitary gates}

In this section, we compare our no-programming inequality for unitary
channels with two bounds in Refs.~\citeSM{SM_hilleryApproximateProgrammableQuantum2006} and~\citeSM{SM_majenzEntropyQuantumInformation2017}.

\subsection{Relation to the results of M.~Hillery \textit{et al.}}

In Ref.~\citeSM{SM_hilleryApproximateProgrammableQuantum2006},
the approximation error is measured by the process fidelity, namely the
squared fidelity between Choi states.  
Let
\begin{equation}
J(\mathcal E)
:=
(\mathcal E^{S\to S'}\otimes\id^{\bar S})(\Phi^{S\bar S})
\end{equation}
be the normalized state obtained through the Choi--Jamio\l{}kowski isomorphism, 
where $\Phi^{S\bar S}$ is a normalized maximally entangled state.  
We define
\begin{equation}
F^2_{\rm proc}(\mathcal E,\mathcal F)
:=
F(J(\mathcal E),J(\mathcal F))^2.
\end{equation}
The corresponding process purified distance is
\begin{equation}
p_{\rm proc}(\mathcal E,\mathcal F)
:=
P(J(\mathcal E),J(\mathcal F))
=
\sqrt{1-F^2_{\rm proc}(\mathcal E,\mathcal F)}.
\end{equation}

We now recall the bound in Ref.~\citeSM{SM_hilleryApproximateProgrammableQuantum2006} in our notation.

\begin{thm}[Eq.~(4.19) of~\citeSM{SM_hilleryApproximateProgrammableQuantum2006}]
Suppose that, for each $i=1,2$, a fixed processor $\Lambda^{SA\sto S'}$ and 
a pure apparatus state $\eta_i^A$ implement the unitary gate
$\mathcal{U}^{S\sto S'}_i$ with process error at most $\epsilon_{\rm proc}$:
\begin{equation}
p_{\rm proc}(\mathcal U_i^{S\sto S'},
    \Lambda^{SA\sto S'}(\bullet\otimes\eta_i^A))
\le
\epsilon_{\rm proc}
\qquad (i=1,2).
\end{equation}
Then
\begin{equation}
F(\eta_1^A,\eta_2^A)
\le
\frac{\abs{\Tr(U_1^\dag U_2)}}{d_S}
\cdot
\min\left(
1,
\frac{
2\left(2\sqrt{d_S}\epsilon_{\rm proc}+d_S\epsilon_{\rm proc}^2
\right)
}
{\norm{\mathcal{U}_1-\mathcal{U}_2}_\diamond}
\right)
+2\epsilon_{\rm proc}
+\epsilon_{\rm proc}^2,
\label{eq:HZB-bound-process}
\end{equation}
where $d_S$ is the dimension of the Hilbert space of $S$.\footnote{
The final displayed equation, Eq.~(4.19) of Ref.~\citeSM{SM_hilleryApproximateProgrammableQuantum2006}, contains the factor $1/\eta$, where
\begin{equation}
\eta=\max_\psi\left(1-\abs*{\bra{\psi}U_1^\dag U_2\ket{\psi}}^2\right)
=\norm{\mathcal{U}_1-\mathcal{U}_2}\diamond^2/4.
\end{equation}
This appears to be a typographical error, since the preceding estimate, Eq.~(4.17) of Ref.~\citeSM{SM_hilleryApproximateProgrammableQuantum2006}, yields
$\frac{1}{\sqrt{\eta}}
=\frac{2}{\norm{\mathcal{U}_1-\mathcal{U}_2}\diamond}$.
We therefore use the corrected factor $\frac{2}{\norm{\mathcal{U}_1-\mathcal{U}2}\diamond}$ in Eq.~\eqref{eq:HZB-bound-process}.
}
\end{thm}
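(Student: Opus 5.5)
The plan is to work at the level of purifications. Let $W:\mathcal H_S\otimes\mathcal H_A\to\mathcal H_{S'}\otimes\mathcal H_E$ be a Stinespring isometry of $\Lambda^{SA\sto S'}$, and write $\eta_i^A=\ket{\eta_i}\bra{\eta_i}$. For every pure input $\ket{\xi}$ on $S\bar S$ the vectors $(W\otimes I)(\ket{\xi}\ket{\eta_i})$ purify the processor outputs. Because $W$ is an isometry, inner products of such vectors reproduce $\braket{\xi}{\xi'}\braket{\eta_1}{\eta_2}$. The idea is to compare these inner products with those of the ideal vectors $(U_i\otimes I)\ket{\xi}\ket{e_i}$. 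The quantity $F(\eta_1^A,\eta_2^A)=\abs{\braket{\eta_1}{\eta_2}}$ is then expressed through $\abs{\braket{e_1}{e_2}}$ up to controlled errors, and $\abs{\braket{e_1}{e_2}}$ is bounded separately. The two bounds are then combined to give the $\min$ in Eq.~\eqref{eq:HZB-bound-process}.

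\emph{Step 1 (maximally entangled input).} Since $J(\mathcal U_i)$ is pure, Uhlmann's theorem gives an environment unit vector $\ket{e_i}$ and a phase such that
\[
\ket{\Psi_i}:=(W\otimes I)(\ket{\Phi}\ket{\eta_i}),\qquad \abs{\bra{\Psi_i}\,(U_i\otimes I)\ket{\Phi}\ket{e_i}}\ge\sqrt{1-\epsilon_{\rm proc}^2}.
\]
The phase can be absorbed into $\ket{e_i}$, so that $\norm{\ket{\Psi_i}-(U_i\otimes I)\ket{\Phi}\ket{e_i}}^2\le 2-2\sqrt{1-\epsilon_{\rm proc}^2}\le 2\epsilon_{\rm proc}^2$. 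I expect this to yield, after a careful inner-product estimate,
\[
\abs{\braket{\eta_1}{\eta_2}}=\abs{\braket{\Psi_1}{\Psi_2}}\le\frac{\abs{\Tr(U_1^\dagger U_2)}}{d_S}\abs{\braket{e_1}{e_2}}+2\epsilon_{\rm proc}+\epsilon_{\rm proc}^2 .
\]
This already gives the branch where the $\min$ equals $1$, using $\abs{\braket{e_1}{e_2}}\le 1$. The constants have to be tuned to match $2\epsilon+\epsilon^2$ exactly. To do so, I would write $\ket{\Psi_i}=c_i(U_i\otimes I)\ket{\Phi}\ket{e_i}+\ket{r_i}$ with $\ket{r_i}$ orthogonal to the ideal vector and $\norm{r_i}\le\epsilon_{\rm proc}$, and expand.

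\emph{Step 2 (transfer to arbitrary inputs).} For any unit $\ket{\psi}\in\mathcal H_S$ we have $\ket{\psi}=\sqrt{d_S}\,(I\otimes\bra{\bar\psi})\ket{\Phi}$. Since $W$ does not act on $\bar S$, applying $\sqrt{d_S}\,\bra{\bar\psi}^{\bar S}$ to the Step~1 decomposition gives
\[
W(\ket{\psi}\ket{\eta_i})=U_i\ket{\psi}\ket{e_i}+\ket{s_i^\psi},
\]
where $\norm{s_i^\psi}\le\sqrt{d_S}\,\epsilon'$ and $\epsilon'$ is of order $\epsilon_{\rm proc}$. Crucially, the same $\ket{e_i}$ serves for all $\psi$. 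Taking two orthogonal unit vectors $\psi\perp\phi$, isometry gives $0=\braket{\psi}{\phi}\braket{\eta_1}{\eta_2}$. Expanding, I get
\[
\abs{\bra{\psi}U_1^\dagger U_2\ket{\phi}}\,\abs{\braket{e_1}{e_2}}\le 2\sqrt{d_S}\epsilon_{\rm proc}+d_S\epsilon_{\rm proc}^2 .
\]

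\emph{Step 3 (optimizing over the pair).} I optimize over $\psi\perp\phi$. Splitting $U_1^\dagger U_2\ket{\phi}$ into its components along and orthogonal to $\ket{\phi}$ gives $\sup_{\psi\perp\phi}\abs{\bra{\psi}U_1^\dagger U_2\ket{\phi}}=\sqrt{1-\abs{\bra{\phi}U_1^\dagger U_2\ket{\phi}}^2}$. Taking the supremum over $\phi$ yields $\sqrt{1-\mu(U_1^\dagger U_2)^2}$. If the diamond norm is attained without reference system, this equals $\tfrac12\norm{\mathcal U_1-\mathcal U_2}_\diamond$ by the computation in the proof of Eq.~\eqref{eq:dist_unitary}. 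If a reference is needed, I repeat Step~2 with inputs on $S\otimes R$, where the same $\sqrt{d_S}$ conversion works on the $S$ factor. This gives
\[
\abs{\braket{e_1}{e_2}}\le\frac{2\bigl(2\sqrt{d_S}\epsilon_{\rm proc}+d_S\epsilon_{\rm proc}^2\bigr)}{\norm{\mathcal U_1-\mathcal U_2}_\diamond}.
\]
Inserting this into Step~1 gives the claimed bound.

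The main obstacle is Step~2. The phases and environment vectors fixed at the maximally entangled input have to work uniformly for all inputs. The error constants after the $\sqrt{d_S}$ rescaling must also come out exactly as $2\sqrt{d_S}\epsilon_{\rm proc}+d_S\epsilon_{\rm proc}^2$, rather than with an extra factor. The cross term $\braket{s_1}{s_2}$ contributes the $d_S\epsilon_{\rm proc}^2$ term and the two mixed terms contribute $2\sqrt{d_S}\epsilon_{\rm proc}$. I would first try to get this by the orthogonal-remainder decomposition rather than a crude triangle inequality.
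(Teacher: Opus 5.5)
\textbf{Verdict.} Your proof is correct and reproduces the stated constants, once one bookkeeping point is fixed. The paper itself gives no proof of this statement. It quotes the bound from Hillery et al.\ and changes only the factor $1/\eta$ to $1/\sqrt{\eta}=2/\norm{\mathcal U_1-\mathcal U_2}_\diamond$. So your proposal has to stand as a self-contained reconstruction, and it does.

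\textbf{What it buys.} Your Step~3 independently confirms the footnote's correction: the quantity that divides is $\sup_{\psi\perp\phi}\lvert\langle\psi|U_1^\dagger U_2|\phi\rangle\rvert=\sqrt{\eta}$, not $\eta$. Your route also shows where the dimension dependence of this bound comes from. You start from the maximally entangled input, which is all the process-fidelity assumption controls, and you pay $\sqrt{d_S}$ when transferring to pure inputs in Step~2. The paper's own inequality (Theorem~\ref{thm:NP_channel}) avoids this. It works with a worst-case error and uses fidelity monotonicity on product inputs $\rho\otimes\eta_i^A$, so it never passes through the Choi state.

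\textbf{The bookkeeping fix.} In Step~2 you dropped the amplitude $c_i$ and wrote $W(\ket{\psi}\ket{\eta_i})=U_i\ket{\psi}\ket{e_i}+\ket{s_i^\psi}$. Two constants then come out slightly wrong:
\begin{itemize}
\item The remainder is only bounded by $\sqrt{d_S}\sqrt{2-2c_i}$, and $\sqrt{2-2\sqrt{1-\epsilon_{\rm proc}^2}}>\epsilon_{\rm proc}$.
\item Bounding $\lvert\langle e_1|e_2\rangle\rvert$ on its own, as in your Step~3, costs an extra factor $1/(c_1c_2)$.
\end{itemize}
Instead, keep the orthogonal decomposition $\ket{\Psi_i}=c_i(U_i\otimes I)\ket{\Phi}\ket{e_i}+\ket{r_i}$ throughout. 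Here $c_i=F(J(\mathcal U_i),J(\Lambda_i))$ and $\norm{r_i}^2=1-c_i^2\le\epsilon_{\rm proc}^2$.

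Step~2 then reads $W(\ket{\psi}\ket{\eta_i})=c_iU_i\ket{\psi}\ket{e_i}+\sqrt{d_S}\,\langle\bar\psi|_{\bar S}\,|r_i\rangle$, and the remainder has norm at most $\sqrt{d_S}\,\epsilon_{\rm proc}$. Orthogonality of $\psi$ and $\phi$ gives
\[
c_1c_2\,\lvert\langle e_1|e_2\rangle\rvert\cdot\tfrac12\norm{\mathcal U_1-\mathcal U_2}_\diamond\le 2\sqrt{d_S}\,\epsilon_{\rm proc}+d_S\,\epsilon_{\rm proc}^2 .
\]
In Step~1, do not discard $c_1c_2$. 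The single quantity $c_1c_2\lvert\langle e_1|e_2\rangle\rvert\le 1$ multiplies $\abs{\Tr(U_1^\dagger U_2)}/d_S$. Its two bounds, the trivial one and the one above, give exactly the $\min$ in the statement. Equivalently, work with the subnormalized environment vectors $c_i\ket{e_i}$ from the start.

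\textbf{A simplification.} The reference-system branch you hedge on in Step~3 never arises. For unitary $W=U_1^\dagger U_2$, both $W$ and $W\otimes I^R$ have numerical range equal to the convex hull of the spectrum of $W$. Hence $\mu(W\otimes I^R)=\mu(W)$, and the unassisted optimization already equals $\tfrac12\norm{\mathcal U_1-\mathcal U_2}_\diamond$.
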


If Eq.~\eqref{eq:HZB-bound-process} is expressed
in terms of our error $\epsilon$ using $\epsilon_{\mathrm{proc}}\le \epsilon$, 
Eq.~\eqref{eq:HZB-bound-process} yields a weaker bound than our inequality
\begin{equation}
    F(\eta_1^A,\eta_2^A)\le \frac{4\epsilon}{\norm{\mathcal{U}_1-\mathcal{U}_2}_\diamond}
    \label{eq:suppl_main}
\end{equation}
for small $\epsilon$.
Here we use 
$\abs{\Tr(U_1^\dag U_2)}/d_S\ge
\inf_\psi\abs*{\bra{\psi}U_1^\dag U_2\ket{\psi}}
=\sqrt{1-\norm{\mathcal{U}_1-\mathcal{U}_2}_\diamond^2/4}$.

Conversely, our inequality~\eqref{eq:suppl_main} gives the following process-fidelity
corollary:
\begin{equation}
F(\eta_1^A,\eta_2^A)
\le
\frac{4}{\norm{\mathcal{U}_1-\mathcal{U}_2}_\diamond}
\sqrt{
2d_S
(1-\sqrt{1-\epsilon_{\rm proc}^2})}
=\frac{4\sqrt{d_S}\epsilon_{\rm proc}}{\norm{\mathcal{U}_1-\mathcal{U}_2}_\diamond}
\bigl(1+O(\epsilon_{\rm proc}^2)\bigr).
\label{eq:our-bound-process-corollary}
\end{equation}
Here we use the following bound\footnote{
This bound follows from Uhlmann's theorem~\citeSM{SM_uhlmannTransitionProbabilityState1976} and the resulting convexity property.
Indeed, for any pure state $\Psi^{S\bar S}$, 
$F((\mathcal E^{S\to S'}\otimes\id^{\bar S})(\Psi^{S\bar S}),(\mathcal F^{S\to S'}\otimes\id^{\bar S})(\Psi^{S\bar S}))$
can be expressed,
by Uhlmann's theorem~\citeSM{SM_uhlmannTransitionProbabilityState1976},
as a convex function of $\Tr_{\bar S}\Psi^{S\bar S}$.
Then, Eq.~\eqref{eq:wc-vs-proc-unitary} follows from the fact that there exists a state $\sigma^S$ satisfying
$\frac{I}{d_S}=\frac{\Tr_{\bar S}\Psi^{S\bar S}+(d_S-1)\sigma^S}{d_S}$, 
where $\frac{I}{d_S}=\Tr_{\bar S}\Phi^{S\bar S}$ is the maximally mixed state.
}:
\begin{equation}
F_{\rm proc}(\mathcal E,\mathcal F)\le
\frac{\sqrt{1-P_e(\mathcal E,\mathcal F)^2}+d_S-1}{d_S}.
\label{eq:wc-vs-proc-unitary}
\end{equation}
For small $\epsilon_{\rm proc}$, Eq.~\eqref{eq:our-bound-process-corollary}
is tighter than Eq.~\eqref{eq:HZB-bound-process} if 
$\norm{\mathcal{U}_1-\mathcal{U}_2}_\diamond\le 
4\sqrt{d_S}/(d_S+1)
$.

\subsection{Relation to the results of C.~Majenz}

In Ref.~\citeSM{SM_majenzEntropyQuantumInformation2017},
the approximation error is measured by the diamond norm:
\begin{equation}
\norm{\mathcal U_i^{S\sto S'}-
    \Lambda^{SA\sto S'}(\bullet\otimes\eta_i^A)}_\diamond
\le
\epsilon_\diamond
\qquad (i=1,2).
\end{equation}
For a unitary target, we have
\begin{equation}
P_e(\mathcal E,\mathcal U)
\le
\sqrt{
\frac{
\norm{\mathcal E-\mathcal U}_\diamond
}{2}
}.
\label{eq:diamond-vs-wc}
\end{equation}
This follows from the Fuchs--van de Graaf inequality~\citeSM{SM_fuchsCryptographicDistinguishabilityMeasures1999}:
\begin{equation}
P(\rho,\sigma)^2\le \frac12\norm{\rho-\sigma}_1
\end{equation}
for pure $\sigma$.

\begin{thm}[Theorem~3.43 of~\citeSM{SM_majenzEntropyQuantumInformation2017}]
Suppose that, for each $i=1,2$, a fixed processor $\Lambda^{SA\sto S'}$ and 
a pure apparatus state $\eta_i^A$ implement the unitary gate
$\mathcal{U}^{S\sto S'}_i$ with diamond-norm error at most $\epsilon_{\diamond}$:
\begin{equation}
\norm{\mathcal U_i- \Lambda^{SA\sto S'}(\bullet\otimes\eta_i^A)}_\diamond
\le
\epsilon_\diamond
\qquad (i=1,2),
\end{equation}
and assume that $\norm{\mathcal{U}_1-\mathcal{U}_2}_\diamond<2$.
Then
\begin{equation}
F(\eta_1^A,\eta_2^A)
\le
\frac{
2\left(
\sqrt{2\epsilon_\diamond}
+7\epsilon_\diamond
+2\sqrt{2}\epsilon_\diamond^{3/2}
+3\epsilon_\diamond^2
\right)
}{\inf_{\theta\in\mathbb R}
\norm{U_1-e^{i\theta}U_2}_\infty}
+2\epsilon_\diamond+\epsilon_\diamond^2.
\label{eq:Majenz-bound}
\end{equation}
\end{thm}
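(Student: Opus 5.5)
The plan is to argue at the level of pure states and Stinespring dilations, following the standard approximate no-programming strategy. Since the program states $\eta_i^A=\ket{\eta_i}\!\bra{\eta_i}$ are pure, $F(\eta_1^A,\eta_2^A)=\abs{\braket{\eta_1}{\eta_2}}$. First I would fix a Stinespring isometry $W\colon \mathcal H_S\otimes\mathcal H_A\to\mathcal H_{S'}\otimes\mathcal H_E$ for $\Lambda^{SA\sto S'}$. Then the channel $\Lambda(\bullet\otimes\eta_i^A)$ has dilation $W_i\coloneqq W(\bullet\otimes\ket{\eta_i})$.

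The key structural step uses continuity of the Stinespring dilation (Kretschmann--Schlingemann--Werner). Since $\norm{\Lambda(\bullet\otimes\eta_i^A)-\mathcal U_i}_\diamond\le\epsilon_\diamond$, and the complementary channel of a unitary is a constant (state-replacement) channel, I will show there exist pure environment states $\ket{e_i}$ such that
\begin{equation}
 \norm{W_i-U_i\otimes\ket{e_i}}_\infty\le\sqrt{2\epsilon_\diamond}+O(\epsilon_\diamond).
\end{equation}
This is where the conversion from diamond norm to purified distance enters, via Eq.~\eqref{eq:diamond-vs-wc}, which produces the $\sqrt{2\epsilon_\diamond}$ leading term. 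The remaining polynomial corrections come from turning a bound on complementary channels into an operator bound on the isometries.

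Next I will compute inner products in two ways. For any unit vector $\psi\in\mathcal H_S$, the isometry property of $W$ gives
\begin{equation}
\braket{\eta_1}{\eta_2}=\bra{\psi}W_1^\dagger W_2\ket{\psi}.
\end{equation}
Substituting the approximations yields
\begin{equation}
\braket{\eta_1}{\eta_2}=c\,\bra{\psi}U_1^\dagger U_2\ket{\psi}+r(\psi),
\end{equation}
where $c=\braket{e_1}{e_2}$ satisfies $\abs{c}\le1$ and does not depend on $\psi$, and $\abs{r(\psi)}\le 2\delta+\delta^2$ with $\delta=\norm{W_i-U_i\otimes\ket{e_i}}_\infty$. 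Because this holds uniformly in $\psi$, the operator $c\,U_1^\dagger U_2-\braket{\eta_1}{\eta_2}\,I$ has numerical radius at most $2\delta+\delta^2$.

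The final step splits into two cases according to the size of $c$. If $\abs{c}$ is small, then $\abs{\braket{\eta_1}{\eta_2}}\le\abs{c}+2\delta+\delta^2$ directly, and this case supplies the additive terms $2\epsilon_\diamond+\epsilon_\diamond^2$ in Eq.~\eqref{eq:Majenz-bound}. Otherwise, I divide by $c$ and write $\braket{\eta_1}{\eta_2}/c=\abs{\,\cdot\,}e^{i\theta}$. I then use the fact that for a normal operator the numerical radius equals the operator norm; here $U_1^\dagger U_2-\alpha I$ is normal. This gives
\begin{equation}
\norm{U_1-\alpha' U_2}_\infty\lesssim(2\delta+\delta^2)/\abs{c},
\end{equation}
with $\abs{\alpha'}$ close to $1$. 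After correcting $\alpha'$ to the phase $e^{i\theta}$ (which costs a term of order $1-\abs{\braket{\eta_1}{\eta_2}/c}$), this bounds $\abs{c}\,\inf_\theta\norm{U_1-e^{i\theta}U_2}_\infty$, and hence $\abs{\braket{\eta_1}{\eta_2}}$, by the stated numerator. The hypothesis $\norm{\mathcal U_1-\mathcal U_2}_\diamond<2$ ensures that $0$ lies outside the numerical range of $U_1^\dagger U_2$, so this phase-fitting step is well defined.

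I expect the main obstacle to be the first step: obtaining the dilation estimate with a $\psi$-independent environment vector and explicit constants. I would first try the KSW continuity theorem applied to $\Lambda(\bullet\otimes\eta_i^A)$ versus $\mathcal U_i$, and then track the constants through the phase bookkeeping that produces the $7\epsilon_\diamond$, $2\sqrt2\epsilon_\diamond^{3/2}$ and $3\epsilon_\diamond^2$ terms.
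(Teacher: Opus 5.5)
\paragraph{Context}
The paper gives no proof of this statement. It quotes it from Majenz and remarks that its own corollary $F(\eta_1^A,\eta_2^A)\le 2\sqrt{2\epsilon_\diamond}/\norm{\mathcal U_1-\mathcal U_2}_\diamond\le 2\sqrt{\epsilon_\diamond}/\gamma$, with $\gamma\coloneqq\inf_\theta\norm{U_1-e^{i\theta}U_2}_\infty$, is uniformly tighter. That corollary comes from the two-target fidelity inequality, Fuchs--van de Graaf, and $\norm{\mathcal U_1-\mathcal U_2}_\diamond\ge\sqrt2\gamma$, so the statement already follows from the paper's results. Your Stinespring route is a different but legitimate start. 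KSW does give a $\psi$-independent $\ket{e_i}$, since every dilation of $\mathcal U_i$ into $S'\otimes E$ has the form $U_i\otimes\ket{e}$. The identity $W_1^\dagger W_2=\braket{\eta_1}{\eta_2}I$ is exactly right. The gap is in your last step.

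\paragraph{The gap}
Write $s=\braket{\eta_1}{\eta_2}$, $X=U_1^\dagger U_2$ and $r=2\delta+\delta^2$. Your remainder $r$ is of order $\sqrt{\epsilon_\diamond}$: with your $\delta\approx\sqrt{2\epsilon_\diamond}$ you get $r\approx 2\sqrt{2\epsilon_\diamond}$. This order cannot be avoided, since $\delta=\Theta(\sqrt{\epsilon_\diamond})$ already for $\Lambda(\bullet\otimes\eta_i^A)=(1-p)\,\mathcal U_i+p\,\mathcal N$.

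So your small-$\abs{c}$ case gives an additive error of order $\sqrt{\epsilon_\diamond}$, not $2\epsilon_\diamond+\epsilon_\diamond^2$; nothing in your argument produces an $O(\epsilon_\diamond)$ additive term. In the other case, fitting one phase gives $\norm{X-e^{i\theta}I}\le\norm{X-(s/c)I}+\bigl|1-\abs{s/c}\bigr|\le 2r/\abs{c}$, i.e.\ $\abs{c}\gamma\le 2r$. Altogether $\abs{s}\le 2r/\gamma+r\approx(4\sqrt2/\gamma+2\sqrt2)\sqrt{\epsilon_\diamond}$. For small $\epsilon_\diamond$ this exceeds the stated $2\sqrt{2\epsilon_\diamond}/\gamma+O(\epsilon_\diamond)$. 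Even with the sharp KSW constant you would get $4\sqrt{\epsilon_\diamond}/\gamma+2\sqrt{\epsilon_\diamond}$, still too large. So ``tracking constants'' cannot recover the $7\epsilon_\diamond$, $2\sqrt2\epsilon_\diamond^{3/2},\dots$; the single-phase fit loses a constant factor by construction.

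\paragraph{A repair within your framework}
(i) KSW gives $\delta\le\sqrt{\epsilon_\diamond}$ directly, because the squared Bures distance of channels is at most their diamond distance; the detour through purified distance is unnecessary. Since $W_2$ and $U_1\otimes\ket{e_1}$ are isometries, $\norm{sI-cX}\le 2\delta\le r\coloneqq 2\sqrt{\epsilon_\diamond}$.

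(ii) Drop the case split and use normality of $X$: $\abs{s-c\lambda}\le r$ for every eigenvalue $\lambda$. Let $\lambda_1,\lambda_2$ be the endpoints of the smallest arc (length $\alpha$) containing the spectrum. The hypothesis $\norm{\mathcal U_1-\mathcal U_2}_\diamond<2$ is exactly $\alpha<\pi$, and then $\norm{\mathcal U_1-\mathcal U_2}_\diamond=2\sin(\alpha/2)$ and $\gamma=2\sin(\alpha/4)$. The parallelogram law on the two disk constraints gives $\abs{s-m}^2\le r^2-\abs{c}^2\sin^2(\alpha/2)$, where $m=c(\lambda_1+\lambda_2)/2$ and $\abs{m}=\abs{c}\cos(\alpha/2)$. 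Cauchy--Schwarz then yields $\abs{s}\le r/\sin(\alpha/2)$. Hence
\[
F(\eta_1^A,\eta_2^A)\le\frac{4\sqrt{\epsilon_\diamond}}{\norm{\mathcal U_1-\mathcal U_2}_\diamond}=\frac{2\sqrt{\epsilon_\diamond}}{\gamma\cos(\alpha/4)}\le\frac{2\sqrt{2\epsilon_\diamond}}{\gamma},
\]
which lies below the stated right-hand side; the case $\gamma=0$ is vacuous.

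The hypothesis enters here, through $\cos(\alpha/4)\ge 1/\sqrt2$, not in making the phase fit well defined. The result is a factor $\sqrt2$ weaker than the paper's fidelity-based inequality, but it suffices for this statement.
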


Combining our inequality~\eqref{eq:suppl_main} with
Eq.~\eqref{eq:diamond-vs-wc} gives the diamond-norm corollary
\begin{equation}
F(\eta_1^A,\eta_2^A)
\le
\frac{2\sqrt{2\epsilon_\diamond}}{\norm{\mathcal{U}_1-\mathcal{U}_2}_\diamond}
\le 
\frac{2\sqrt{\epsilon_\diamond}}{\inf_{\theta\in\mathbb R}
\norm{U_1-e^{i\theta}U_2}_\infty}
.
\label{eq:our-diamond-corollary}
\end{equation}
For $0<\norm{\mathcal{U}_1-\mathcal{U}_2}_\diamond<2$,
Eq.~\eqref{eq:our-diamond-corollary}
is uniformly tighter than Eq.~\eqref{eq:Majenz-bound}.
Here we use the inequality~\citeSM{SM_majenzEntropyQuantumInformation2017}:
\begin{equation}
   \norm{\mathcal{U}_1-\mathcal{U}_2}_\diamond=
   2\gamma\sqrt{1-\gamma^2/4}
   \ge \sqrt{2}\gamma
   \quad (\gamma\coloneqq\inf_{\theta\in\mathbb R}
\norm{U_1-e^{i\theta}U_2}_\infty\in[0,\sqrt{2}])
\end{equation}
for $\norm{\mathcal{U}_1-\mathcal{U}_2}_\diamond<2$.

\section{Projective quadrature measurements under photon-number conservation}
Let $\hat{a}^S$ be the annihilation operator, 
and let
$\hat{q}^S
=
\frac{1}{2}
\left(\hat{a}^S+(\hat{a}^S)^\dagger\right)$
be a quadrature amplitude, with PVM
$\mathbb{P}^{S\to \R}$.
Suppose that this PVM is implemented as an indirect measurement
using other optical fields as ancillary systems, 
where the interaction preserves
the total photon number and the final measurement is obtained by
post-processing the photon-counting outcomes.
This measurement model is covariant with respect to the $U(1)$ symmetry
generated by the total photon number, i.e., 
\begin{equation}
    V_\theta^X=e^{i\theta \hat{n}^X},\qquad
    T_\theta^\R=\id_\R,
\end{equation}

where
$\hat{n}^X=(\hat{a}^X)^\dagger\hat{a}^X$
is the number operator of system $X$.
The rotated PVM
$\mathbb{P}^{S\to \R}_\bullet[\theta]
\coloneqq
\mathcal{V}^S_{\theta}(\mathbb{P}^{S\to \R}_\bullet)$
is the PVM of
$\hat{q}^S(\theta)
    \coloneqq
    \mathcal{V}^S_{\theta}(\hat{q}^S)
    =
    \frac{1}{2}
    \left(
        e^{-i\theta}\hat{a}^S
        +
        e^{i\theta}(\hat{a}^S)^\dagger
    \right).$
Using coherent states $\ket{\alpha}$, 
one finds that, 
for every $\theta\in(0,2\pi)$,
\begin{align}
    \norm{
    \mathbb{P}^{S\to \R}_{[0,\infty)}[\theta]
    -
    \mathbb{P}^{S\to \R}_{[0,\infty)}
    }
    &\ge
    \sup_{\alpha\in\C}
    \abs*{
    \braket[3]{\alpha}{
    \mathbb{P}^{S\to \R}_{[0,\infty)}[\theta]
    -
    \mathbb{P}^{S\to \R}_{[0,\infty)}
    }{\alpha}
    }\ge 1.
    \label{eq:orthogonalize_coherent}
\end{align}
The second inequality is shown as follows. 
For a coherent state $\ket{\alpha}$, the quadrature
$\hat{q}^S(\theta)$ has a Gaussian distribution with mean
$\Re(e^{-i\theta}\alpha)$ and variance $1/4$.
Therefore,
\begin{equation}
    \lim_{R\to\infty}
    \bra{Re^{i\phi}}
        \chi_{[0,\infty)}\bigl(\hat{q}^S(\theta)\bigr)
    \ket{Re^{i\phi}}
    =
    \begin{cases}
        1, & \cos(\phi-\theta)>0,\\
        0, & \cos(\phi-\theta)<0 .
    \end{cases}
\end{equation}
Hence, for any fixed $\theta\in(0,2\pi)$, we have
\begin{align}
    &\sup_{\alpha\in\C}
    \abs*{\braket[3]{\alpha}{\chi_{[0,\infty)}(\hat{q}^S)-\chi_{[0,\infty)}(\hat{q}^S(\theta))}{\alpha}}\nonumber\\
    &\ge 
    \lim_{R\to\infty}
    \abs*{\braket[3]{Re^{i\frac{\theta-\pi}{2}}}{\chi_{[0,\infty)}(\hat{q}^S)-\chi_{[0,\infty)}(\hat{q}^S(\theta))}{Re^{i\frac{\theta-\pi}{2}}}}\nonumber\\
    &=1.
\end{align}

Equation~\eqref{eq:orthogonalize_coherent} is the optical analogue of the
orthogonalization property used for the position measurement in the main text.
Indeed, the denominator in Eq.~\eqref{eq:main_PVM} is equal to one for every
$\theta\in(0,2\pi)$. 
On the other hand,
\begin{equation}
    F\bigl(\eta^A,\mathcal{V}_\theta^A(\eta^A)\bigr)\to1
    \qquad
    (\theta\to0)
\end{equation}
for every ancilla state $\eta^A$.
Hence Eq.~\eqref{eq:main_PVM} can hold only if
\begin{equation}
    \epsilon\ge \frac{1}{2}.
\end{equation}
Thus no approximate implementation can be uniformly accurate over all input states.

\section{General results for restricted input state sets}

The results in the main text also extend to the case where the implementation
error is evaluated only on a specified set of input states.
We state the extension only for measurements; the case of
quantum channels is completely analogous.

Let $\mathcal{S}$ be a set of density operators.
For two measurements $\mathbb{P}$ and $\mathbb{Q}$, we define their distance on
$\mathcal{S}$ by
\begin{equation}
    P_{\mathcal{S}}(\mathbb{P},\mathbb{Q})
    \coloneqq
    \sup_{\rho\in\mathcal{S}}
    \left[
        1-
        F_{\mathrm{BC}}\bigl(\mathbb{P}(\rho),\mathbb{Q}(\rho)\bigr)^2
    \right]^{1/2}.
\end{equation}
We say that an ancilla state $\eta^A$ and a measurement
$\mathbb{M}^{SA\to\Omega}$ implement a measurement
$\mathbb{P}^{S\to\Omega}$ on $\mathcal{S}$ with error $\epsilon\ge 0$ if
\begin{equation}
    P_{\mathcal{S}}
    \bigl(
        \mathbb{P}^{S\to\Omega},
        \mathbb{M}^{SA\to\Omega}(\bullet\otimes\eta^A)
    \bigr)
    \le
    \epsilon .
    \label{eq:approx_implement_restricted}
\end{equation}
For two sets of states $\mathcal{S}_1$ and $\mathcal{S}_2$, define
\begin{align}
    \hat{F}_{\mathcal{S}_1,\mathcal{S}_2}
    (\lambda;\mathbb{P},\mathbb{Q})
    &\coloneqq
    \inf_{\substack{
        (\rho,\sigma)\in\mathcal{S}_1\times\mathcal{S}_2\\
        F(\rho,\sigma)\ge \lambda
    }}
    F_{\mathrm{BC}}\bigl(\mathbb{P}(\rho),\mathbb{Q}(\sigma)\bigr),
    \\
    d_{\mathcal{S}_1,\mathcal{S}_2}(\mathbb{P},\mathbb{Q})
    &\coloneqq
    \sup\Bigl\{
        \lambda\in[0,1]\ \Big|\ 
        \hat{F}_{\mathcal{S}_1,\mathcal{S}_2}
        (\lambda;\mathbb{P},\mathbb{Q})=0
    \Bigr\}.
\end{align}
Then Theorem~\ref{thm:NP_measurement} admits the following direct
generalization.

\begin{thm}
Suppose that, for each $i=1,2$, the pair
$(\mathbb{M}^{SA\to\Omega},\eta_i^A)$ implements
$\mathbb{P}_i^{S\to\Omega}$ on $\mathcal{S}_i$ with error $\epsilon$.
Then
\begin{equation}
    F(\eta_1^A,\eta_2^A)
    \le
    \inf_{\lambda>0}
    \frac{
        \hat{F}_{\mathcal{S}_1,\mathcal{S}_2}
        (\lambda;\mathbb{P}_1^{S\to\Omega},\mathbb{P}_2^{S\to\Omega})
        +2\epsilon
    }{\lambda}.
\end{equation}
In particular, if
$d_{\mathcal{S}_1,\mathcal{S}_2}
(\mathbb{P}_1^{S\to\Omega},\mathbb{P}_2^{S\to\Omega})>0$, then
\begin{equation}
    F(\eta_1^A,\eta_2^A)
    \le
    \frac{
        2\epsilon
    }{
        d_{\mathcal{S}_1,\mathcal{S}_2}
        (\mathbb{P}_1^{S\to\Omega},\mathbb{P}_2^{S\to\Omega})
    } .
\end{equation}
\end{thm}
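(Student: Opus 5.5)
The proof will follow the argument for Theorem~\ref{thm:NP_measurement} essentially verbatim, with the optimization over input pairs restricted to $\mathcal S_1\times\mathcal S_2$. Fix $\lambda>0$ and any pair $(\rho,\sigma)\in\mathcal S_1\times\mathcal S_2$ with $F(\rho,\sigma)\ge\lambda$. Multiplicativity of the fidelity under tensor products and monotonicity under the measurement $\mathbb M^{SA\to\Omega}$ (viewed as a channel to a classical system, where the fidelity reduces to $F_{\mathrm{BC}}$) give
\begin{equation*}
F(\rho,\sigma)F(\eta_1^A,\eta_2^A)\le F_{\mathrm{BC}}\bigl(\mathbb M(\rho\otimes\eta_1^A),\mathbb M(\sigma\otimes\eta_2^A)\bigr).
\end{equation*}

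Next I will apply the continuity bound $\abs{F_{\mathrm{BC}}(\mu,\nu)-F_{\mathrm{BC}}(\mu,\nu')}\le P(\nu,\nu')$ once in each argument. The relevant hypothesis is used only at the states actually appearing. Since $\rho\in\mathcal S_1$, the restricted implementation condition gives $P(\mathbb P_1(\rho),\mathbb M(\rho\otimes\eta_1^A))\le P_{\mathcal S_1}(\cdots)\le\epsilon$. Likewise $\sigma\in\mathcal S_2$ gives the bound for the second argument. The resulting inequality is
\begin{equation*}
F(\rho,\sigma)F(\eta_1^A,\eta_2^A)\le F_{\mathrm{BC}}\bigl(\mathbb P_1(\rho),\mathbb P_2(\sigma)\bigr)+2\epsilon.
\end{equation*}

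Using $F(\rho,\sigma)\ge\lambda$ on the left, then taking the infimum over admissible pairs and then over $\lambda>0$, yields the first claim. If no admissible pair exists, $\hat F_{\mathcal S_1,\mathcal S_2}(\lambda)=+\infty$ and the bound is trivial.

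For the second claim, take $\lambda<d_{\mathcal S_1,\mathcal S_2}$ with $\hat F_{\mathcal S_1,\mathcal S_2}(\lambda)=0$. Such $\lambda$ exist arbitrarily close to the supremum, and in fact $\hat F$ is nondecreasing in $\lambda$, since the constraint set shrinks as $\lambda$ grows, so every smaller $\lambda>0$ also works. This gives $F(\eta_1^A,\eta_2^A)\le2\epsilon/\lambda$, and letting $\lambda\uparrow d_{\mathcal S_1,\mathcal S_2}$ finishes the proof.

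The only point requiring care, and hence the main obstacle, is the continuity bound for the Bhattacharyya coefficient on general measurable spaces. I would justify it via the Hilbert-space embedding $\mu\mapsto\sqrt{d\mu/d\kappa}$ with respect to a common dominating measure $\kappa$. In that picture $F_{\mathrm{BC}}$ is an inner product and $P$ is a sine distance, so the bound follows from the triangle inequality for the angle metric $\arccos F_{\mathrm{BC}}$ together with $\abs{\cos a-\cos b}\le\sin\abs{a-b}$ when $\abs{a-b}\le\pi/2$. When the angle difference exceeds $\pi/2$ the bound holds trivially because then $P=1\ge$ the left-hand side. This is the same ingredient used in Theorem~\ref{thm:NP_measurement}, so it can be assumed.
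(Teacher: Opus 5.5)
Your proof is correct and follows the paper's own argument: the paper presents this theorem as the direct generalization of the End Matter proof of Theorem~\ref{thm:NP_measurement}, and the only change needed is the one you make, namely invoking the implementation hypothesis only at $\rho\in\mathcal S_1$ and $\sigma\in\mathcal S_2$ before optimizing over admissible pairs and $\lambda$. One small slip in your optional justification of the continuity bound: $\lvert\cos a-\cos b\rvert\le\sin\lvert a-b\rvert$ does not follow from $\lvert a-b\rvert\le\pi/2$ alone (take $a,b=\pi/2\mp\delta$), but it does hold for $a,b\in[0,\pi/2]$, which is automatic here because Bhattacharyya coefficients are nonnegative, so the bound and your proof stand.
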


Applying this restricted version yields the corresponding WAY-type statement
for restricted input state sets.
Assume that $\mathbb{M}^{SA\to\Omega}$ in
Eq.~\eqref{eq:approx_implement_restricted} is covariant.
Then
\begin{equation}
P_{\mathcal{V}_g^S(\mathcal{S})}(\mathcal{T}_g^\Omega\circ\mathbb{P}^{S\sto  \Omega}\circ\mathcal{V}^S_{g^{-1}},\mathbb{M}^{SA\sto\Omega}(\bullet\otimes \mathcal{V}^A_g(\eta^A)))\le \epsilon.
\end{equation}
Accordingly, define
\begin{align}
    &\hat{F}_{\mathcal{S}}
    (\lambda;\mathbb{P}^{S\to\Omega},g)
    \nonumber\\
    &\quad\coloneqq
    \hat{F}_{\mathcal{S},\mathcal{V}_g^S(\mathcal{S})}
    \bigl(
        \lambda;
        \mathbb{P}^{S\to\Omega},
        \mathcal{T}_g^\Omega
        \circ
        \mathbb{P}^{S\to\Omega}
        \circ
        \mathcal{V}_{g^{-1}}^S
    \bigr)
    \nonumber\\
    &\quad=
    \inf_{\substack{
        (\rho,\sigma)\in\mathcal{S}^2\\
        F\bigl(\rho,\mathcal{V}_g^S(\sigma)\bigr)\ge \lambda
    }}
    F_{\mathrm{BC}}
    \bigl(
        \mathbb{P}^{S\to\Omega}(\rho),
        \mathcal{T}_g^\Omega
        \circ
        \mathbb{P}^{S\to\Omega}(\sigma)
    \bigr).
\end{align}
We then obtain the following restricted-input version of the WAY-type bound.

\begin{thm}
If an ancilla state $\eta^A$ and a covariant measurement
$\mathbb{M}^{SA\to\Omega}$ implement a measurement $\mathbb{P}^{S\to\Omega}$ on
$\mathcal{S}$ with error $\epsilon$, then, for every $g\in G$,
\begin{equation}
    F\bigl(\eta^A,\mathcal{V}_g^A(\eta^A)\bigr)
    \le
    \inf_{\lambda>0}
    \frac{
        \hat{F}_{\mathcal{S}}
        (\lambda;\mathbb{P}^{S\to\Omega},g)
        +2\epsilon
    }{\lambda}.
\end{equation}
\end{thm}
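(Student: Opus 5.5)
The plan is to obtain this restricted-input WAY-type bound from the preceding restricted no-programming theorem, using covariance to turn a single implementation into two. Set $\eta_1^A=\eta^A$, $\mathcal S_1=\mathcal S$, and $\mathbb P_1=\mathbb P^{S\to\Omega}$. Set $\eta_2^A=\mathcal V_g^A(\eta^A)$, $\mathcal S_2=\mathcal V_g^S(\mathcal S)$, and $\mathbb P_2=\mathcal T_g^\Omega\circ\mathbb P^{S\to\Omega}\circ\mathcal V_{g^{-1}}^S$. If both pairs $(\mathbb M^{SA\to\Omega},\eta_i^A)$ implement $\mathbb P_i$ on $\mathcal S_i$ with error $\epsilon$, the previous theorem gives $F(\eta^A,\mathcal V_g^A(\eta^A))\le\inf_{\lambda>0}[\hat F_{\mathcal S_1,\mathcal S_2}(\lambda;\mathbb P_1,\mathbb P_2)+2\epsilon]/\lambda$. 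By definition, $\hat F_{\mathcal S_1,\mathcal S_2}$ is exactly $\hat F_{\mathcal S}(\lambda;\mathbb P^{S\to\Omega},g)$, which is the claimed bound.

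The first step is the twisted implementation statement displayed just before the theorem. Take $\sigma'=\mathcal V_g^S(\sigma)$ with $\sigma\in\mathcal S$. Covariance~\eqref{eq:covariant_measurement} gives
\begin{align}
\mathbb M^{SA\to\Omega}\bigl(\sigma'\otimes\mathcal V_g^A(\eta^A)\bigr)
&=\mathbb M^{SA\to\Omega}\circ(\mathcal V_g^S\otimes\mathcal V_g^A)(\sigma\otimes\eta^A)\nonumber\\
&=\mathcal T_g^\Omega\bigl(\mathbb M^{SA\to\Omega}(\sigma\otimes\eta^A)\bigr).
\end{align}
Likewise, $\mathbb P_2(\sigma')=\mathcal T_g^\Omega(\mathbb P^{S\to\Omega}(\sigma))$. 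The map $\mathcal T_g^\Omega$ is a pushforward along a measurable bijection, so $F_{\mathrm{BC}}(\mathcal T_g^\Omega\mu,\mathcal T_g^\Omega\nu)=F_{\mathrm{BC}}(\mu,\nu)$. One way to see this is that $F_{\mathrm{BC}}$ can be written as an infimum over finite measurable partitions, and $T_g^\Omega$ maps such partitions bijectively onto one another. Hence the error of the second implementation at $\sigma'$ equals the error of the first at $\sigma$, and taking the supremum over $\sigma\in\mathcal S$ shows the error on $\mathcal V_g^S(\mathcal S)$ is at most $\epsilon$.

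The second step identifies $\hat F_{\mathcal S,\mathcal V_g^S(\mathcal S)}$ with $\hat F_{\mathcal S}$. Again write the second argument as $\sigma'=\mathcal V_g^S(\sigma)$ with $\sigma\in\mathcal S$. Then $F(\rho,\sigma')=F(\rho,\mathcal V_g^S(\sigma))$, and $\mathbb P_2(\sigma')=\mathcal T_g^\Omega\circ\mathbb P^{S\to\Omega}(\sigma)$. This reproduces the displayed formula for $\hat F_{\mathcal S}(\lambda;\mathbb P^{S\to\Omega},g)$.

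The main point needing care is the antiunitary case. There $\mathcal V_g$ is conjugation by an antiunitary operator, and I must check three things.
\begin{enumerate}
\item The map $\mathcal V_g^S\otimes\mathcal V_g^A$ acts on product states as written, via the antiunitary $V_g^S\otimes V_g^A$. This is automatic from the setup of the paper.
\item Fidelity is invariant under antiunitary conjugation. This follows because antiunitaries map purifications to purifications and preserve $|\langle\cdot,\cdot\rangle|$.
\item The relation $\mathcal V_{g^{-1}}^S\circ\mathcal V_g^S=\id$ holds for projective representations, since phases cancel under conjugation.
\end{enumerate}
Only item 3 is actually used in the identities above; items 1 and 2 are not strictly needed for this step. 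With these checks done, the two implementations are in place and the previous restricted no-programming theorem finishes the proof.
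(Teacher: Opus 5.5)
Your proposal is correct and follows the paper's route exactly. Covariance shows that $(\mathbb M^{SA\to\Omega},\mathcal V_g^A(\eta^A))$ implements $\mathcal T_g^\Omega\circ\mathbb P^{S\to\Omega}\circ\mathcal V_{g^{-1}}^S$ on $\mathcal V_g^S(\mathcal S)$ with error $\epsilon$; the restricted no-programming theorem is then applied with $\mathcal S_1=\mathcal S$ and $\mathcal S_2=\mathcal V_g^S(\mathcal S)$, where $\hat F_{\mathcal S}$ is by definition the resulting $\hat F_{\mathcal S_1,\mathcal S_2}$. (For the error-transfer step you only need the data-processing direction $F_{\mathrm{BC}}(\mathcal T_g^\Omega\mu,\mathcal T_g^\Omega\nu)\ge F_{\mathrm{BC}}(\mu,\nu)$, so the invariance of $F_{\mathrm{BC}}$ under $\mathcal T_g^\Omega$ that you argue for is more than is required.)
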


\bibliographystyleSM{apsrev4-2-titles-revised}
\bibliographySM{bibliographySM}

\end{document}